\newtheorem{theorem}{Theorem}
\newtheorem{lemma}{Lemma}
\newtheorem{corollary}{Corollary}
\newtheorem{remark}{Remark}
\def\ScaleIfNeeded{%
\ifdim\Gin@nat@width>\linewidth \linewidth \else \Gin@nat@width
\fi } \makeatother
\begin{document}

\title{\Huge{Secrecy Performance Analysis of RIS Assisted Ambient Backscatter Communication Networks}}

\author{Yingjie\ Pei,~\IEEEmembership{Graduate Student Member,~IEEE}, Xinwei\ Yue,~\IEEEmembership{Senior Member,~IEEE}, Chongwen\ Huang,~\IEEEmembership{Senior Member,~IEEE} and Zhiping\ Lu

\thanks{A portion of this work has been submitted to the IEEE Wireless Communications and Networking Conference, Dubai, United Arab Emirates, April, 2024.}
\thanks{Y. Pei is with the Key Laboratory of Information and Communication Systems, Ministry of Information Industry and the Key Laboratory of Modern Measurement $\&$ Control Technology, Ministry of Education, Beijing Information Science and Technology University, Beijing 100101, China. He is also with the State Key Laboratory of Integrated Service Networks, Xidian University, Xi'an 710071, China. (email: yingjie.pei0716@gmail.com).}
\thanks{X. Yue is with the Key Laboratory of Information and Communication Systems, Ministry of Information Industry and also with the Key Laboratory of Modern Measurement $\&$ Control Technology, Ministry of Education, Beijing Information Science and Technology University, Beijing 100101, China (email: xinwei.yue@bistu.edu.cn).}
\thanks{C. Huang is with College of Information Science and Electronic Engineering, Zhejiang University, Hangzhou 310027, China, with the State Key Laboratory of Integrated Service Networks, Xidian University, Xi'an 710071, China, and Zhejiang Provincial Key Laboratory of Info. Proc., Commun. \& Netw. (IPCAN), Hangzhou 310027, China. (email: chongwenhuang@zju.edu.cn).}
\thanks{Z. Lu is the School of Information and Communication Engineering, Beijing University of Posts and Telecommunications, Beijing 100876, China  and also with State Key Laboratory of Wireless Mobile Communications, China Academy of Telecommunications Technology (CATT), Beijing 100191, China.  (email: luzp@cict.com).}
}

\maketitle

\begin{abstract}
Reconfigurable intelligent surface (RIS) and ambient backscatter communication (AmBC) have been envisioned as two promising technologies due to their high transmission reliability as well as energy-efficiency. This paper investigates the secrecy performance of RIS assisted AmBC networks. New closed-form and asymptotic expressions of secrecy outage probability for RIS-AmBC networks are derived by taking into account both imperfect successive interference cancellation (ipSIC) and perfect SIC (pSIC) cases. On top of these, the secrecy diversity order of legitimate user is obtained in high signal-to-noise ratio region, which equals \emph{zero} and is proportional to the number of RIS elements for ipSIC and pSIC, respectively. The secrecy throughput and energy efficiency are further surveyed to evaluate the secure effectiveness of RIS-AmBC networks. Numerical results are provided to verify the accuracy of theoretical analyses and manifest that: i) The secrecy outage behavior of RIS-AmBC networks exceeds that of conventional AmBC networks; ii) Due to the mutual interference between direct and backscattering links, the number of RIS elements has an optimal value to minimise the secrecy system outage probability; and iii) Secrecy throughput and energy efficiency are strongly influenced by the reflecting coefficient and eavesdropper's wiretapping ability.
\end{abstract}
\begin{keywords}
Backscatter communication, reconfigurable intelligent surface, physical layer security.
\end{keywords}
\section{Introduction}
Reconfigurable intelligent surface (RIS) has been deemed a revolutionary technology due to its inherent ability to smartly adjust and reconfigure the wireless channel conditions \cite{2020QingqingMag,2021RIS6G,2021YuanweiRISSurTur}. With the assistance of plentiful low-power reconfigurable elements, the phase shifts and amplitude of incident waves can be regulated independently to establish reliable reflecting links thus realizing the intelligent reconfiguration of signal propagation environment \cite{pei2022secrecy,2021YuanweiRISNOMAOMA}. In \cite{HuangRIS2019}, the authors highlighted the superiority of RIS in terms of energy efficiency by redesigning the transmitting power allocation and the phase shifts.
The authors of \cite{2020BeixiongRISOFDM} verified that RIS aided wireless networks can achieve a higher achievable rate compared to conventional relaying schemes. In \cite{2020TaoqinRIS}, the authors studied that the superior ergodic capacity and outage behaviours can be realized by applying RIS. Despite RIS is capable of providing a more reliable transmission mechanism for wireless communication networks, the radio signals are in danger of being individually overheard by malicious eavesdroppers (Eve) due to the open characteristics and complexity of electromagnetic environment \cite{20206G,20226GMag}. 
With regard to the physical layer security, the secrecy outage probability (SOP) of legitimate users (LUs) was discussed in \cite{2020IRSNOMAPLS}, where RIS is beneficial to secure wireless transmissions. The authors of \cite{2020QingqingRISPLSAN} researched that the secrecy performance of RIS assisted communication networks can be improved by introducing artificial noise especially in multiple Eves cases. In \cite{yan2021intelligent}, the authors enhanced the average secrecy rate within wiretapping channels by leveraging RIS to generate additional randomness. Given the Eve always remains silent, the authors of \cite{2020MIMOIRSsec_2} proposed an RIS aided secure transmission scheme to achieve enhanced secrecy rate with lack of channel state information (CSI). The authors of \cite{2019CuimiaoRISsec} considered more challenging scenarios where the Eve and the LU are in the same direction for the base station (BS). Moreover, the authors of \cite{2022CaihongPLSRISNOMA} further studied the impact of RIS on the secrecy performance of wireless networks in both external and internal wiretapping scenarios.

As a typical green communication technology, ambient backscatter communication (AmBC) has received a lot of attention in recent years since it can modulate its own signal to the surrounding electromagnetic waves for transmissions \cite{2016ABCDetecPerform,darsena2017modeling}. A hybrid design of transmitter based on wireless powered communications and AmBC was proposed in \cite{2018ABCHanzhu}. A cooperative AmBC scheme was proposed in \cite{2018YingchangBC}, where the reader restores signal from both backscatter device as well as the radio frequency (RF). The authors of \cite{2020OPBC} investigated the outage behaviours of AmBC networks with the consideration of co-channel interference. To improve the spectral efficiency, the authors of \cite{2021JiakouRISBC} incorporated novel multi-access strategies into AmBC networks which can realize superior system sum rate. Apart from these, the backscatter signal can also expose to the risk of eavesdropping due to the open nature of wireless propagation \cite{SaadWalid2014PLSBacCom}. In \cite{2021XingwangBCPLS}, artificial noise was utilized to guarantee the secure transmission of AmBC networks by taking hardware impairments into consideration. The authors of \cite{2020XingwangBCNOMAIQ} highlighted the impact of in-phase and quadrature-phase imbalance on the secrecy outage performance of AmBC networks. In \cite{2023ShaoboAmBCSmartTransport}, the secrecy performance of AmBC-aided smart transportation networks was analyzed with the invasion of vicious Eves and jamming scheme was also utilized to confuse the reception at Eves. Furthermore, the authors of \cite{JiYoonHan2020MultiTag} scheduled a portion of tags in a multi-tag AmBC secure communication network for data transmission, while allocating another portion of tags for sending artificial noise or identical data to enhance the security of signals. The authors of \cite{Muratkar2022PLSAmBC} innovatively considered the impact of terminal mobility on the security performance of the AmBC network. In cases where there is no direct connection between the BS and the users, the authors in \cite{XingwangLi2023} investigated the secure transmission characteristics of the AmBC network collaborating with decode-and-forward relays.

The good adaptability of RIS allows it to be effectively deployed in AmBC networks. Specifically, the basic technology principle of RIS-AmBC was introduced in \cite{2022YingchangBacRISTur}. The authors of \cite{2022TurRISBC6G} analysed various cooperative paradigms between RIS and AmBC. The bit error rate performance of RIS-AmBC networks was analyzed in \cite{2021YunfeiRISBC}, where the channel conditions of both source-reader and source-tag links can be improved with the assistance of RIS. The clodsed-form and asymptotic achievable rate expressions of RIS-AmBC networks were obtained in \cite{2023KhanYasinVTC}. As a further advancement, the achievable rate of RIS-AmBC networks was maximized in \cite{2023LokuRISAmBC} by jointly optimizing the phase shifts as RIS. Different from the aforementioned literature where RIS is only used for link enhancement, the authors of \cite{2019BasarRIS} investigated the function of RIS as an access point by utilizing an unmodulated carrier, which indicated that RIS can also perform backscatter communication. On this basis, the authors of \cite{Hancheng2023RISAmBC} divided the RIS into two parts, one for transmitting the primary data signal and the other for transmitting the backscatter signal. Furthermore, authors of \cite{HuiMa2022RISAmBC} exploited a portion of the reflecting elements to perform energy harvesting while the remanent elements implement
backscatter transmission. The effects brought by the number of reflecting elements on the quality of backscatter and direct links were discussed in \cite{2020WenjingBacandDire}, where the backscatter link can be stronger than another link at the expense of deploying sufficient RIS elements and high cost. A secrecy transmission scheme of RIS-AmBC networks was proposed in \cite{JinmingWang2022RISAmBCPLS}, where an alternative algorithm was developed to maximize the secrecy rate of the backscatter signal. Recently, the authors of \cite{BinLyu2023RISAmBCPLS} investigated the secrecy performance of active RIS aided symbiotic radio with imperfect CSI of Eves.

While the interaction between RIS and AmBC networks holds the potential for substantial performance enhancements in low-power communications, the secrecy attributes of both data and backscatter signals within RIS-AmBC networks remain uncertain. Research on RIS-AmBC networks currently only stays at the optimization level. Specifically, the secrecy rate was maximized in \cite{JinmingWang2022RISAmBCPLS}, where the impact brought by the number of users/Eves/RIS elements were discussed at length. As a further advancement, by minimizing the total power consumption while ensuring reliable communication quality for users, a robust and secure active RIS aided symbiotic radio network was proposed in \cite{BinLyu2023RISAmBCPLS}. The two aforementioned papers did not provide an accurate theoretical measurement of the secrecy performance of RIS-AmBC networks. Additionally, there is still a research gap in studying the impacts of different signals, Eve's wiretapping capability, and ipSIC on the secure transmission of RIS-AmBC networks. Hence, we propose a secure RIS-AmBC framework, where the vicious Eve attempts to overhear the data signal and backscatter signal from the ambient RF source and RIS performing as an AmBC device, respectively. Given the low power of the backscatter signal, tuning the reconfigurable elements within the RIS becomes pivotal for coherent alignment of the cascaded channels, thereby amplifying the channel gain of the backscatter links. Moreover, the error propagation and quantization errors during the successive interference cancellation (SIC) process could potentially compromise the decoding efficiency at LU. As a result, thorough research is conducted into the adverse impact of residual interference on the secrecy outage patterns, secrecy throughput, and energy efficiency of RIS-AmBC networks. To the best of our knowledge, this paper is the first to conduct an analysis of the physical layer security performance of RIS-AmBC networks. The main contributions of this paper can be summarized as follows:

\begin{enumerate}
  \item We propose the RIS-AmBC secure communication networks with the existence of a malicious Eve attempting to overhear the information of LU. RIS support facilitates coherent alignment between the incident and backscatter channels of LU, ensuring dependable and secure transmission of backscatter signals. In wiretapping scenarios, the reconfigurable matrix adopts a conventional random phase-shifting design termed an on-off control scheme. Moreover, the SOP serves as a crucial metric in assessing secrecy performance. Furthermore, we survey the superiority of RIS-AmBC networks in the secrecy throughput and secrecy energy efficiency under delay-limited transmission mode.
  \item We derive the closed-form and asymptotic expressions of SOP for LU by taking into account imperfect SIC (ipSIC) and perfect SIC (pSIC). We verified that the SOPs of backscatter signals and the overall RIS-AmBC networks are significantly lower than those of the conventional AmBC networks without RIS. To reap more insights, the secrecy diversity order of LU is further attained, which is equal to \emph{zero} and proportional to the number of RIS elements for ipSIC and pSIC scenarios, respectively.
  \item We investigate the effect of RIS deployment location on the secrecy performance of the RIS-AmBC networks. In general, deploying the RIS in an intermediate position is beneficial for improving the secure transmission of data signal and the overall security of AmBC-RIS networks, but can exacerbate the secrecy outage behaviour of the backscatter signal. Therefore, it is recommended that the RIS should be placed closer to the BS or LU rather than right in the centre of them. Moreover, the growth of both Eve's wiretapping capability as well as reflecting coefficient can be detrimental to the secrecy throughput and secrecy energy efficiency of RIS-AmBC networks.
\end{enumerate}

\subsection{Organization and Notations}
The remainder of this paper is organized as follows. The network model and on-off control scheme are presented in Section \ref{SectionII}. The derivations of SOP, secrecy throughput and energy efficiency are given in Section \ref{SectionIII}. Numerical results and analyses are provided in Section \ref{SectionIV} followed by conclusions provided in Section \ref{SectionV}.

The primary notations adopted in this paper are explained as follows. $\mathbb{E}\left\{  \cdot  \right\}$ indicates the expectation operation. The cumulative distribution function (CDF) and the probability density function (PDF) with parameter \emph{X} are given by ${F_X}\left(  \cdot  \right)$ and ${f_X}\left(  \cdot  \right)$, respectively. ${{\mathbf{I}}_p }$ refers to a $p  \times p $ identity matrix and ${{\mathbf{1}}_q }$ denotes a $q  \times 1$ all-ones column vector. $ \otimes $ represents Kronecker product. ${{\mathbf{h}}^H}$ indicates the hermitian transpose operation on a matrix ${{\mathbf{h}}}$. $h \sim \mathcal{C}\mathcal{N}\left( {a,b} \right)$ represents variable $h$ follows a complex Gaussian distribution with parameters $a$ and $b$.

\section{Network Model}\label{SectionII}
\subsection{Signal Model}

\begin{figure}[t!]
    \begin{center}
        \includegraphics[width=3in,  height=2in]{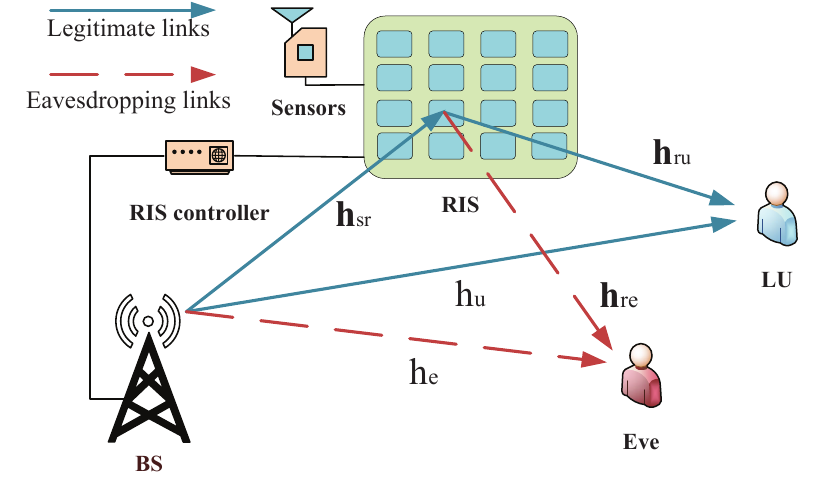}
        \caption{An illustration of RIS assisted secure AmBC networks.}
        \label{sys_model.eps}
    \end{center}
\end{figure}

We consider a secure RIS aided AmBC network, where the ambient RF source, LU and the Eve are all equipped with single antenna\footnote{Since the BS usually has fixed number of antennas when deploying RIS in the AmBC networks, we downplay the effect of it and focus on the influence of RIS and Eve characteristics on the secrecy performance of AmBC networks. The secure transmission of RIS-AmBC with multiple antennas is set aside for our future treatise.}. The RIS is composed of \emph{M} passive RIS elements and each element can independently adjust the phase shift of incident wave to reconfigure wireless propagation environments. In secure RIS-AmBC networks, the RIS acts as an AmBC device which can modulate its binary signal onto incident waves without any dedicated RF source. The complex channel gains from the source to RIS, and then from RIS to LU and Eve are indicated as ${{\mathbf{h}}_{sr}} \in {\mathbb{C}^{M \times 1}}$, ${{\mathbf{h}}_{ru}} \in {\mathbb{C}^{M \times 1}}$ and ${{\mathbf{h}}_{re}} \in {\mathbb{C}^{M \times 1}}$, respectively. ${h_u} \in {\mathbb{C}^{1 \times 1}}$ and ${h_e} \in {\mathbb{C}^{1 \times 1}}$ represent the direct communication links from the source to LU and Eve, respectively. Assuming that reflecting elements on the RIS are small in size and densely arranged, ${{\mathbf{h}}_{sr}}$, ${{\mathbf{h}}_{ru}}$ and ${{\mathbf{h}}_{re}}$ can be regarded as identically distributed channels. Note that the wireless links involved in the RIS-AmBC networks are supposed to be Rayleigh flat fading channels
\cite{2021XingwangBCPLS,2020WenjingBacandDire}. We suppose that the CSI of LU can be acquired perfectly at RIS \cite{2021ChongwenIRSChannEsti}, but not the Eve due to its silence.

As a consequence, the LU is able to receive two parts of the signal, i.e., the data signal from the direct link of BS and the backscatter signal from RIS. The received information at LU is given by
\begin{align}\label{received signal at user}
{y_u} = \left[  {{h_u} + \kappa {\mathbf{h}}_{ru}^H{\mathbf{\Phi }}{{\mathbf{h}}_{sr}}c(t)} \right]\sqrt {{P_s}}x(t) + {n_u},
\end{align}where ${{P_s}}$ denotes the transmitting power of the source. ${\mathbf{\Phi }} = {\text{diag}}\left\{ {{e^{j{\theta _1}}}, \cdots ,{e^{j{\theta _m}}}, \cdots ,{e^{j{\theta _M}}}} \right\}$ denotes the phase shift matrix at RIS and ${{\theta _m}}$ is the phase shift at the \emph{m}-th RIS elements. $\kappa  \in \left( {0,1} \right)$ is the reflecting coefficient. $c$ and $x$ indicate the backscatter signals from RIS and data signal from the source with normalized power, i,e., $\mathbb{E}\left\{ {{{\left| c(t) \right|}^2}} \right\} = \mathbb{E}\left\{ {{{\left| x(t) \right|}^2}} \right\} = 1$. ${n_u} \sim \mathcal{C}\mathcal{N}\left( {0,\sigma _u^2} \right)$ is the additive white Gaussian noise (AWGN) at LU. 
We assume that the coherent phase shift is utilized to redesign phase shifts at RIS to match with phases of cascaded channels\footnote{Coherent phase shift is considered in this paper since the theoretical results can be provided as upper bound. Moreover, the effect of continuous phase shift can be approximated by increasing the discrete phase bit and optimization method, which is beyond the scope of this paper.}. Based on these, the above equation can be further given by
\begin{align}\label{received signal at user v2 coherent}
{y_u} = \left[ {{h_u} + \sum\limits_{m = 1}^M {\left| {h_{ru}^mh_{sr}^m} \right|} c(t)} \right]\sqrt {{P_s}}x(t) + {n_u},
\end{align}where $h_{u} = \sqrt {\eta d_{su}^{ - \lambda }} \tilde h_{u}$, $h_{sr}^m = \sqrt {\eta d_{sr}^{ - \lambda }} \tilde h_{sr}^m$, $h_{ru}^m = \sqrt {\eta d_{ru}^{ - \lambda }} \tilde h_{ru}^m$ denote the channel coefficients from the source to the \emph{m}-th RIS element and from the \emph{m}-th RIS element to LU, respectively. $h_{u} \sim \mathcal{C}\mathcal{N}\left( {0,{\Omega _{u}}} \right)$, $h_{sr}^m \sim \mathcal{C}\mathcal{N}\left( {0,{\Omega _{sr}}} \right)$, $h_{ru}^m \sim \mathcal{C}\mathcal{N}\left( {0,{\Omega _{ru}}} \right)$ and $\tilde h_{u}$, $\tilde h_{sr}^m$, $\tilde h_{ru}^m \sim \mathcal{C}\mathcal{N}\left( {0,1} \right)$. The distances of source-RIS, RIS-LU and source-LU are denoted as ${d_{sr}}$, ${d_{ru}}$ and ${d_{su}}$, respectively. $\lambda $ indicates the path loss exponent and $\eta $ is the frequency dependent factor.

SIC is deemed as a effective technology for interference cancellation and extensively researched and applied in various wireless communication networks [43,49,50]. Considering SIC primarily relies on differences in signal strength to enable a more accurate decoding of received signals, in the proposed AmBC-RIS networks, the LU decodes the data signal from direct links with a higher signal-to-interference-plus-noise ratio(SINR) first and then employs SIC to remove it\footnote{Considering the weakness of the backscatter signal and the limited RIS reflector, we assume that the quality of the direct link is stronger than that of the backscatter link. In addition, multiplicative fading effects further weaken the backscattered signal \cite{LinglongDai2023RIS}. For the case where the direct link is stronger than the backscatter link, please refer to \cite{2020WenjingBacandDire}.}. The SINR for LU to decode data signal can be given by
\begin{align}\label{SINR uu}
{\gamma _{uu}} = \frac{{{P_s}{{\left| {{h_u}} \right|}^2}}}{{{\kappa ^2}{{\left( {\sum\nolimits_{m = 1}^M {\left| {h_{ru}^mh_{sr}^m} \right|} } \right)}^2}{P_s} + \sigma _u^2}}.
\end{align} After this, the LU starts decoding the backscatter signal and the relevant SINR is shown as
\begin{align}\label{SINR uc ipsic}
\gamma _{uc}^{ipSIC} = \frac{{{\kappa ^2}{{\left( {\sum\nolimits_{m = 1}^M {\left| {h_{ru}^mh_{sr}^m} \right|} } \right)}^2}{P_s}}}{{\varpi {P_s}{{\left| {{h_{ipu}}} \right|}^2} + \sigma _u^2}},
\end{align}where $\varpi  \in \left[ {0,1} \right]$ denotes the residual interference degree of SIC. $\varpi  = 0$ and $\varpi  \ne 0$ represent the operation of pSIC and ipSIC, respectively. Without loss of generality, the corresponding factor of residual interference brought by ipSIC is considered as ${h _{ipu}} \sim \mathcal{C}\mathcal{N}\left( {0,{\Omega _{ipu}}} \right)$ \cite{2022XinweiYueRISNOMA}.

Due to the open nature of wireless communications, confidential signals are exposed to the risk of being overheard by the baleful Eve. It is pivotal to evaluate the secrecy characteristics of RIS-AmBC networks with the presence of a wiretap. In this case, the received signal at Eve is shown as
\begin{align}\label{received signal at Eve}
{y_e} = \left( {{h_e} + {\mathbf{h}}_{re}^H{\mathbf{\Phi }}{{\mathbf{h}}_{sr}}c} \right)\sqrt {{P_s}}x + {n_e},
\end{align}where ${n_e} \sim \mathcal{C}\mathcal{N}\left( {0,\sigma _e^2} \right)$ is the AWGN at the Eve. $h_{e} = \sqrt {\eta d_{se}^{ - \lambda }} \tilde h_{e}$, ${{\mathbf{h}}_{re}} = {\left[ {h_{re}^1, \cdots ,h_{re}^m, \cdots ,h_{re}^M} \right]^H}$, $h_{re}^m = \sqrt {\eta d_{re}^{ - \lambda }} \tilde h_{re}^m$ and $\tilde h_{e}$, $\tilde h_{re}^m \sim \mathcal{C}\mathcal{N}\left( {0,1} \right)$. The distances of RIS-Eve and source-Eve are denoted as ${d_{re}}$ and ${d_{se}}$, respectively.

Similar to the SIC processes of LU, the Eve first decodes the data signal and then obtains the backscatter signal. The corresponding SINRs can be separately given by
\begin{align}\label{SINR Eve decode u}
{\gamma _{eu}} = \frac{{{P_s}{{\left| {{h_e}} \right|}^2}}}{{{\kappa ^2}{{\left| {{\mathbf{h}}_{re}^H{\mathbf{\Phi }}{{\mathbf{h}}_{sr}}} \right|}^2}{P_s} + \sigma _e^2}},
\end{align}
and
\begin{align}\label{SINR Eve decode c}
\gamma _{ec}^{ipSIC} = \frac{{{\kappa ^2}{{\left| {{\mathbf{h}}_{re}^H{\mathbf{\Phi }}{{\mathbf{h}}_{sr}}} \right|}^2}{P_s}}}{{\varpi {P_s}{{\left| {{h_{ipe}}} \right|}^2} + \sigma _e^2}},
\end{align}where the residential interference at Eve is denoted by ${h_{ipe}} \sim \mathcal{C}\mathcal{N}\left( {0,{\Omega _{ipe}}} \right)$.

\subsection{RIS-AmBC Networks with On-off Control Scheme}
Different from that RIS can adjust the phase shifts to align the phases of the cascaded channels for LU, the coherent phase shift is inapplicable to the Eve due to the lack of CSI. To tackle this problem, a random phase shift called on-off control scheme is harnessed to design phase shifts of RIS to the Eve where each element of RIS can be set to 1 (on) or 0 (off) \cite{2019ZhiguoAsimpledesign}. Specifically, we suppose that the total number of reflecting components $M = P \times Q$, where $P$ and $Q$ are both positive integers\footnote{By altering the relative values of \emph{P} and \emph{Q}, the on-off control scheme can approximate different wiretapping scenarios. For instance, large value of \emph{P} indicates Eve may be positioned farther from the RIS beam, making it nearly impossible to acquire sufficient leaked signals. Please refer \cite{2019ZhiguoAsimpledesign} for more information about on-off control scheme. }. Setting the matrix ${\mathbf{V}} = {{\mathbf{I}}_P} \otimes {{\mathbf{1}}_Q}$ and the \emph{p}-th column of ${\mathbf{V}}$ is denoted as a vector ${{\mathbf{v}}_p}$.
The corresponding cascaded channel ${{\mathbf{h}}_{re}^H{\mathbf{\Phi }}{{\mathbf{h}}_{sr}}}$ can be recast as ${{\mathbf{v}}_p^H{{\mathbf{\Lambda }}_{re}}{{\mathbf{h}}_{sr}}}$, where the diagonal matrix ${{{\mathbf{\Lambda }}_{re}}}$ is generated by diagonalizing ${{\mathbf{h}}_{re}^H}$. With the assistance of on-off control scheme, SINRs for the Eve to intercept data and backscatter signals can be rewritten as ${\gamma _{eu}} = \frac{{{P_s}{{\left| {{h_e}} \right|}^2}}}{{{\kappa ^2}{{\left| {{\mathbf{v}}_p^H{{\mathbf{\Lambda }}_{re}}{{\mathbf{h}}_{sr}}} \right|}^2}{P_s} + \sigma _e^2}}$ and $\gamma _{ec}^{ipSIC} = \frac{{{\kappa ^2}{{\left| {{\mathbf{v}}_p^H{{\mathbf{\Lambda }}_{re}}{{\mathbf{h}}_{sr}}} \right|}^2}{P_s}}}{{\varpi {P_s}{{\left| {{h_{ipe}}} \right|}^2} + \sigma _e^2}}$, respectively.

\begin{lemma} \label{Lemma1}
By utilizing the coherent phase shift, the cumulative distribution function (CDF) of SINR for LU to decode the data signal in RIS-AmBC networks is given by
\begin{align}\label{CDF SINR uu}
{F_{{\gamma _{uu}}}}\left( x \right)  \approx  1 - \frac{1}{{\Gamma \left( {\alpha  + 1} \right)}}\sum\limits_{d = 0}^D {{G_d}{{\left( {{\tau _d}} \right)}^\alpha }{e^{ - \frac{{x\left( {{\kappa ^2}{\beta ^2}\tau _d^2\rho  + 1} \right)}}{{{\Omega _u}\rho }}}}},
\end{align}
where $\rho  = {{{P_s}} \mathord{\left/
 {\vphantom {{{P_s}} {\sigma _u^2}}} \right.
 \kern-\nulldelimiterspace} {\sigma _u^2}}$ refers to the transmitting signal-to-noise ratio (SNR). $\alpha  = \left[ {{\pi ^2}M/\left( {16 - {\pi ^2}} \right)} \right] - 1$ and $\beta  = \left( {4/\pi  - \pi /4} \right)\sqrt {{\Omega _{sr}}{\Omega _{ru}}} $. ${G_d} = {{{\left( {D!} \right)}^2}}/\left\{{{{{\tau _d}}{{\left[ {L_{_D}^\prime \left( {{\tau _d}} \right)} \right]}^2}}}\right\}$ 
 and ${{\tau _d}}$ represent the weight of Gauss-Laguerre quadrature formula and the d-th zero point of Laguerre polynomial ${L_D}\left( {{\tau _d}} \right)$ with d = 1,2,3, $\cdots$, D, respectively. D presents a complexity accuracy tradeoff parameter and the equal sign in (\ref{CDF SINR uu}) can be established when D approaches infinity \emph{\cite{2022XinweiYueRISNOMA,Yingjie2023RISNOMAPLS}}. Note that D is set to 300 for the purpose of sufficient convergence. $\Gamma \left(  \cdot  \right)$ indicates the gamma function \emph{\cite[Eq. (8.310.1)]{gradvstejn2000table}}.
 \end{lemma}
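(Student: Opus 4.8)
The plan is to derive the CDF of $\gamma_{uu}$ by first characterizing the distribution of the cascaded channel gain $W = \sum_{m=1}^{M}|h_{ru}^{m}h_{sr}^{m}|$, then conditioning on it and averaging over its distribution. First I would note that each summand $|h_{ru}^{m}h_{sr}^{m}|$ is a product of two independent Rayleigh-distributed magnitudes (scaled so that $\mathbb{E}\{|h_{ru}^{m}|^{2}\}=\Omega_{ru}$ and $\mathbb{E}\{|h_{sr}^{m}|^{2}\}=\Omega_{sr}$); its mean is $(\pi/4)\sqrt{\Omega_{sr}\Omega_{ru}}$ and its second moment is $\Omega_{sr}\Omega_{ru}$, so its variance is $(1-\pi^{2}/16)\Omega_{sr}\Omega_{ru}$. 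For moderately large $M$, by the central limit theorem $W$ is approximately Gaussian, but since $W\ge 0$ it is more accurate — and this is the standard device in the cited references \cite{2022XinweiYueRISNOMA} — to moment-match $W$ to a Gamma random variable with shape $\alpha+1$ and scale $\beta$, i.e. $W\sim\mathrm{Gamma}(\alpha+1,\beta)$ where matching $\mathbb{E}\{W\}=(\alpha+1)\beta$ and $\mathbb{E}\{W^{2}\}=(\alpha+1)(\alpha+2)\beta^{2}$ to the exact moments yields exactly $\alpha=\pi^{2}M/(16-\pi^{2})-1$ and $\beta=(4/\pi-\pi/4)\sqrt{\Omega_{sr}\Omega_{ru}}$ as claimed. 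I would record the resulting PDF $f_{W}(w)=\frac{1}{\Gamma(\alpha+1)\beta^{\alpha+1}}w^{\alpha}e^{-w/\beta}$.

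Next I would condition on $W=w$. From \eqref{SINR uu}, $\gamma_{uu}=\frac{P_s|h_u|^{2}}{\kappa^{2}w^{2}P_s+\sigma_u^{2}}$, and since $|h_u|^{2}$ is exponential with mean $\Omega_u$, conditioning gives
\begin{align}\label{conditional CDF}
\Pr\left(\gamma_{uu}<x\mid W=w\right)=1-\exp\left(-\frac{x(\kappa^{2}w^{2}P_s+\sigma_u^{2})}{\Omega_u P_s}\right)=1-\exp\left(-\frac{x(\kappa^{2}w^{2}\rho+1)}{\Omega_u\rho}\right),
\end{align}
using $\rho=P_s/\sigma_u^{2}$. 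Then I would write ${F_{\gamma_{uu}}}(x)=\int_{0}^{\infty}\Pr(\gamma_{uu}<x\mid W=w)f_{W}(w)\,dw$, which equals $1-\frac{1}{\Gamma(\alpha+1)\beta^{\alpha+1}}e^{-x/(\Omega_u\rho)}\int_{0}^{\infty}w^{\alpha}\exp\left(-\frac{w}{\beta}-\frac{\kappa^{2}x\rho}{\Omega_u\rho}w^{2}\right)dw$. This integral mixes a linear and a quadratic term in the exponent and has no elementary closed form, so here I would invoke the Gauss--Laguerre quadrature: substitute $t=w/\beta$ so the weight $e^{-t}t^{\alpha}$ appears (up to the $\Gamma(\alpha+1)$ normalization that produces the prefactor), approximate $\int_{0}^{\infty}e^{-t}g(t)\,dt\approx\sum_{d=0}^{D}G_d\,g(\tau_d)$ with $g(t)=t^{\alpha}e^{-\kappa^{2}\beta^{2}x t^{2}/\Omega_u}$ appropriately folded, and thereby obtain \eqref{CDF SINR uu} with the factor $e^{-x(\kappa^{2}\beta^{2}\tau_d^{2}\rho+1)/(\Omega_u\rho)}$ and the $(\tau_d)^{\alpha}$ term. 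I would double-check the bookkeeping of which powers of $\beta$ and $\tau_d$ get absorbed into $G_d$ versus left explicit, matching the stated weights $G_d=(D!)^{2}/\{\tau_d[L_D'(\tau_d)]^{2}\}$.

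The main obstacle is justifying the Gamma approximation for $W$ and tracking its accuracy: strictly this is an approximation (hence the $\approx$ in the statement), and the cleanest route is simply to cite the moment-matching argument from \cite{2022XinweiYueRISNOMA,Yingjie2023RISNOMAPLS} rather than reprove a CLT-type bound. The secondary technical point is the exponent manipulation when applying Gauss--Laguerre quadrature — making sure the $w^{2}$ term lands inside $g$ while the $w^{\alpha}e^{-w/\beta}$ part is matched to the quadrature weight function — and confirming that letting $D\to\infty$ recovers equality. Everything downstream (the exponential CDF of $|h_u|^{2}$, the substitution, the final assembly) is routine once those two points are settled.
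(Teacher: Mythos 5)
Your proposal is correct and follows essentially the same route as the paper's proof: approximate $\sum_{m}|h_{ru}^{m}h_{sr}^{m}|$ by a $\mathrm{Gamma}(\alpha+1,\beta)$ variable, condition on it using the exponential CDF of $|h_u|^{2}$, and evaluate the resulting integral via the substitution $t=w/\beta$ and Gauss--Laguerre quadrature. The only difference is that you explicitly carry out the moment-matching that yields $\alpha$ and $\beta$ (your computation checks out), whereas the paper simply cites this PDF from the references.
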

\begin{proof}
\emph{The CDF of SINR for LU to decode the data signal in RIS-AmBC networks can be written as}

\begin{align}\label{a1}
{F_{{\gamma _{uu}}}}\left( x \right) &= {\rm{Pr}}\left( {{\gamma _{uu}} < x} \right)\notag \\ &= {\rm{Pr}}[\underbrace {{{\left| {{h_u}} \right|}^2}}_Z < x({\kappa ^2}|\underbrace {\sum\nolimits_{m = 1}^M {\left| {h_{ru}^mh_{sr}^m} \right|} }_Y{|^2}\rho + 1)/\rho ]\notag \\ &= \int_0^\infty  {{f_Y}\left( y \right){F_Z}\left[ {x\left( {{\kappa ^2}{y^2}\rho + 1} \right)/\rho } \right]} dy,
\end{align} \emph{where} ${F_{Z}}\left( z \right) = 1 - {e^{ - \frac{z}{{{\Omega _u}}}}}$ \emph{based on} \cite{liu2014outage}.

\emph{According to} \cite{2020Boulogeorgos,primak2005stochastic}, \emph{the probability density function (PDF) of $Y = {\sum\nolimits_{m = 1}^M {\left| {h_{ru}^mh_{sr}^m} \right|} }$ can be given as}
\begin{align}\label{a2}
{f_Y}\left( y \right) = \frac{{{y^\alpha }}}{{{\beta ^{\alpha  + 1}}\Gamma \left( {\alpha  + 1} \right)}}{e^{ - \frac{y}{\beta }}},
\end{align}\emph{where $\alpha  = \left[ {{\pi ^2}M/\left( {16 - {\pi ^2}} \right)} \right] - 1$ and $\beta  = \left( {4/\pi  - \pi /4} \right)\sqrt {{\Omega _{sr}}{\Omega _{ru}}} $. Upon substituting (\ref{a2}) into (\ref{a1}), we can acquire}
\begin{align}\label{a3}
{F_{{\gamma _{uu}}}}\left( x \right) = 1 - \int_0^\infty  {\frac{{{y^\alpha }{e^{ - \frac{y}{\beta }}}}}{{{\beta ^{\alpha  + 1}}\Gamma \left( {\alpha  + 1} \right)}}{e^{ - \frac{{x\left( {{\kappa ^2}{y^2}\rho  + 1} \right)}}{{{\Omega _u}\rho }}}}} dy.
\end{align}
\emph{Assuming $t = y/\beta $ and utilizing Gauss-Laguerre integration, we can obtain (\ref{CDF SINR uu}) and the proof is completed.}
\end{proof}
\begin{lemma} \label{Lemma2}
By utilizing the coherent phase shift, the CDF of SINR for LU to decode the backscatter signal with ipSIC in RIS-AmBC networks is given by
\begin{align}\label{CDF SINR uc ipsic}
{F_{\gamma _{uc}^{ipSIC}}}\left( x \right) \approx \sum\limits_{d = 0}^D {\frac{{{G_d}}}{{\Gamma \left( {\alpha  + 1} \right)}}\gamma \left( {\alpha  + 1,\sqrt {\frac{{x\left( {\varpi \rho {\Omega _{ipu}}{\tau _d} + 1} \right)}}{{\rho {\kappa ^2}{\beta ^2}}}} } \right)} ,
\end{align}
where $\gamma \left(  \cdot  \right)$ is the lower incomplete gamma function \emph{\cite[Eq. (8.350.1)]{gradvstejn2000table}}.
 \end{lemma}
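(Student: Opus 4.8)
The plan is to follow the same template as the proof of Lemma~\ref{Lemma1}, but now the roles are dual: here it is the cascaded-channel sum whose CDF is required (rather than its PDF as in Lemma~\ref{Lemma1}), and the residual interference plays the part of the ``weighting'' variable. Writing $F_{\gamma_{uc}^{ipSIC}}(x)=\Pr(\gamma_{uc}^{ipSIC}<x)$ and dividing the numerator and denominator of (\ref{SINR uc ipsic}) by $\sigma_u^2$, I would first express the SINR as $\kappa^2 Y^2\rho/(\varpi\rho W+1)$ with $Y=\sum_{m=1}^M|h_{ru}^m h_{sr}^m|$ and $W=|h_{ipu}|^2$. Since both sides are nonnegative, the event $\{\gamma_{uc}^{ipSIC}<x\}$ is equivalent to $\bigl\{Y<\sqrt{x(\varpi\rho W+1)/(\kappa^2\rho)}\,\bigr\}$, which confines the cascaded term to a single one-sided inequality.

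Next I would condition on $W$, which is independent of $Y$ and, by the model assumption $h_{ipu}\sim\mathcal{C}\mathcal{N}(0,\Omega_{ipu})$, exponentially distributed with mean $\Omega_{ipu}$, so that
\begin{align}
F_{\gamma_{uc}^{ipSIC}}(x)=\int_0^\infty\frac{1}{\Omega_{ipu}}e^{-w/\Omega_{ipu}}\,F_Y\!\left(\sqrt{\frac{x(\varpi\rho w+1)}{\kappa^2\rho}}\right)dw .
\end{align}
The key observation is that the PDF of $Y$ obtained in (\ref{a2}) is that of a Gamma variate of shape $\alpha+1$ and scale $\beta$, hence $F_Y(y)=\gamma(\alpha+1,y/\beta)/\Gamma(\alpha+1)$ with $\gamma(\cdot)$ the lower incomplete gamma function. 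Substituting this expression and absorbing $\beta$, $\kappa$ and $\rho$ into the square root turns the argument of the incomplete gamma into $\sqrt{x(\varpi\rho w+1)/(\kappa^2\beta^2\rho)}$, leaving an integral of the form $\int_0^\infty e^{-w/\Omega_{ipu}}g(w)\,dw$.

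Finally, the change of variable $t=w/\Omega_{ipu}$ brings this to the canonical Gauss--Laguerre form $\int_0^\infty e^{-t}g(\Omega_{ipu}t)\,dt$, to which I would apply the $D$-point Gauss--Laguerre quadrature with the same weights $G_d$ and nodes $\tau_d$ as in Lemma~\ref{Lemma1}; evaluating $g$ at $w=\Omega_{ipu}\tau_d$ reproduces exactly (\ref{CDF SINR uc ipsic}), and the pSIC CDF follows by setting $\varpi=0$. I do not anticipate a genuine analytical obstacle: the randomness of the backscatter cascaded channel has already been packaged into the Gamma approximation of Lemma~\ref{Lemma1}, and the residual-interference term integrates out cleanly through the quadrature. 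The only step demanding care is the constant bookkeeping when moving $\beta$, $\kappa$ and $\rho$ in and out of the square root, together with performing the $t=w/\Omega_{ipu}$ substitution \emph{before} invoking the quadrature so that the Laguerre weight $e^{-t}$ is matched correctly.
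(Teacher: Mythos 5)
Your proposal matches the paper's own proof essentially step for step: rewrite the SINR in terms of $\rho$, reduce the event to a one-sided bound on $Y$, condition on the exponentially distributed residual-interference power, use the Gamma CDF $F_Y(y)=\gamma(\alpha+1,y/\beta)/\Gamma(\alpha+1)$ obtained by integrating (\ref{a2}), and apply Gauss--Laguerre quadrature after the substitution $t=z/\Omega_{ipu}$. No gaps; the constant bookkeeping you flag is indeed the only delicate point, and your handling of it is consistent with (\ref{CDF SINR uc ipsic}).
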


\begin{proof}
\emph{According to the analyses above, the SINR for LU to decode backscatter signal with ipSIC can be expressed as
\begin{align}\label{b1}
\gamma _{uc}^{ipSIC} = \frac{{{\kappa ^2}{{\left( {\sum\nolimits_{m = 1}^M {\left| {h_{ru}^mh_{sr}^m} \right|} } \right)}^2}\rho }}{{\varpi \rho {{\left| {{h_{ipu}}} \right|}^2} + 1}}.
\end{align}
Hence, the CDF of ${\gamma _{uc}^{ipSIC}}$ can be given as
\begin{align}\label{b2}
{F_{\gamma _{uc}^{ipSIC}}}\left( x \right) &= \Pr \left( {\gamma _{uc}^{ipSIC} < x} \right)\notag \\ &= \Pr [ {{|\underbrace {\sum\limits_{m = 1}^M {|h_{ru}^mh_{sr}^m|} }_Y{|^2}} < {\frac{x}{{\rho {\kappa ^2}}}}( {\varpi \rho \underbrace {|{h_{ipu}}{|^2}}_Z + 1} )} ] \notag \\ &= \int_0^\infty  {{f_Z}\left( z \right)} {F_Y}\left[ {\sqrt {x\left( {\varpi \rho z + 1} \right)/\left( {\rho {\kappa ^2}} \right)} } \right]dz,
\end{align}where ${f_Z}\left( z \right) = \frac{1}{{{\Omega _{ipu}}}}{e^{ - \frac{z}{{{\Omega _{ipu}}}}}}$ \emph{\cite{yue2017exploiting}}.
Applying the integration operation to (\ref{a2}), the CDF of $Y = {\sum\limits_{m = 1}^M {|h_{ru}^mh_{sr}^m|} }$ is shown as
\begin{align}\label{b3}
{F_Y}\left( y \right) = \frac{{\gamma \left( {\alpha  + 1,\frac{y}{\beta }} \right)}}{{\Gamma \left( {\alpha  + 1} \right)}}.
\end{align}
Upon substituting ${f_Z}\left( z \right)$ and (\ref{b3}) into (\ref{b2}), ${F_{\gamma _{uc}^{ipSIC}}}$ can be further written as
\begin{align}\label{b4}
{F_{\gamma _{uc}^{ipSIC}}}\left( x \right) =& \frac{1}{{\Gamma \left( {\alpha  + 1} \right){\Omega _{ipu}}}}\int_0^\infty  {{e^{ - \frac{z}{{{\Omega _{ipu}}}}}}}\notag \\ &\times \gamma \left( {\alpha  + 1,\frac{1}{{\kappa \beta }}\sqrt {x\left( {\varpi \rho z + 1} \right)/\rho } } \right)dz.
\end{align}By replacing $z/{{\Omega _{ipu}}}$ with $t$ and adopting Gauss-Laguerre integration, (\ref{CDF SINR uc ipsic}) is acquired which completes the proof.}
\end{proof}
When $\varpi  = 0$, the CDF of SINR for LU to decode the backscatter signal with pSIC in RIS-AmBC networks is given by
\begin{align}\label{CDF SINR uc psic}
{F_{\gamma _{uc}^{pSIC}}}\left( x \right) = \frac{1}{{\Gamma \left( {\alpha  + 1} \right)}}\gamma \left( {\alpha  + 1,\frac{1}{{\beta \kappa }}\sqrt {\frac{x}{\rho }} } \right).
\end{align}

\section{Secrecy Performance Evaluation}\label{SectionIII}
In this section, the SOP is selected as a critical indicator to evaluate the secrecy performance of RIS-AmBC networks. The secrecy diversity order is also derived in the high SNR region to highlight the approximate secrecy features.
\subsection{Statistical Properties for Eavesdropping Channels}
For the purpose of evaluating the negative effects of Eve imposed on the secrecy RIS-AmBC transmission, the statistical properties for eavesdropping channels are analysed in the following.

By applying on-off control scheme, (\ref{SINR Eve decode u}) and (\ref{SINR Eve decode c}) can be rewritten as
\begin{align}\label{SINR eu 1bit}
{{\hat \gamma }_{eu}} = \frac{{{P_s}{{\left| {{h_e}} \right|}^2}}}{{{\kappa ^2}{{\left| {{\mathbf{v}}_p^H{{\mathbf{\Lambda }}_{re}}{{\mathbf{h}}_{sr}}} \right|}^2}{P_s} + \sigma _e^2}},
\end{align}
and
\begin{align}\label{SINR ec 1bit}
\hat \gamma _{ec}^{ipSIC} = \frac{{{\kappa ^2}{{\left| {{\mathbf{v}}_p^H{{\mathbf{\Lambda }}_{re}}{{\mathbf{h}}_{sr}}} \right|}^2}{P_s}}}{{\varpi {P_s}{{\left| {{h_{ipe}}} \right|}^2} + \sigma _e^2}},
\end{align}respectively.

\begin{lemma} \label{Lemma3}
By utilizing the on-off control scheme, the CDF of SINR for the Eve to decode data signal in RIS-AmBC networks is given by
\begin{align}\label{CDF SINR eu}
{F_{{{\hat \gamma }_{eu}}}}\left( x \right) =& 1 - \frac{{2{\Omega _e}}}{{\Gamma \left( Q \right){\kappa ^2}x}}\sum\limits_{d = 0}^D {{G_d}}\notag \\ &\times \sqrt {\frac{{\Xi _{eu}^{Q - 1}}}{{{{\left( {{\Omega _{sr}}{\Omega _{re}}} \right)}^{Q + 1}}}}} {K_{Q - 1}}\left( {2\sqrt {\frac{{{\Xi _{eu}}}}{{{\Omega _{sr}}{\Omega _{re}}}}} } \right),
\end{align}
where ${\Xi _{eu}} = \left( {{\Omega _e}{\rho _e}{\tau _d} - x} \right)/\left( {{\kappa ^2}{\rho _e}x} \right)$, ${\rho _e} = {P_s}/\sigma _e^2$ denotes the transmitting SNR of eavesdropping channels.
\end{lemma}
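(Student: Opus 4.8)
The plan is to split the problem into two parts: (i) determining the distribution of the effective cascaded gain $W:=\left|\mathbf{v}_{p}^{H}\mathbf{\Lambda}_{re}\mathbf{h}_{sr}\right|^{2}$ that the Eve sees under the on-off control scheme, and (ii) averaging the CDF of the ratio $\hat\gamma_{eu}$ over the direct-link gain $|h_{e}|^{2}$. For (i), note that under the on-off design only the $Q$ ``on'' elements contribute, so $\mathbf{v}_{p}^{H}\mathbf{\Lambda}_{re}\mathbf{h}_{sr}$ is a sum of $Q$ products of independent Rayleigh coefficients. Conditioning on $\mathbf{h}_{sr}$, this sum is zero-mean complex Gaussian with variance $\Omega_{re}S$, where $S=\sum_{m=1}^{Q}|h_{sr}^{m}|^{2}$ follows a Gamma distribution with shape $Q$ and scale $\Omega_{sr}$; hence $W$ conditioned on $S$ is exponential with mean $\Omega_{re}S$. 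Marginalising over $S$,
\begin{align}\label{pf3a}
f_{W}(w)=\frac{1}{\Gamma(Q)\Omega_{re}\Omega_{sr}^{Q}}\int_{0}^{\infty}s^{Q-2}e^{-s/\Omega_{sr}-w/(\Omega_{re}s)}\,ds ,
\end{align}
and recognising the integral as the standard modified-Bessel integral \cite[Eq. (3.471.9)]{gradvstejn2000table} gives $f_{W}(w)=\frac{2}{\Gamma(Q)}\sqrt{w^{Q-1}/(\Omega_{sr}\Omega_{re})^{Q+1}}\,K_{Q-1}\!\left(2\sqrt{w/(\Omega_{sr}\Omega_{re})}\right)$, with corresponding CDF $F_{W}$ (in particular $F_{W}(0)=0$).

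For (ii), write $V:=|h_{e}|^{2}$, which is exponential with mean $\Omega_{e}$, and observe that $\hat\gamma_{eu}<x$ is equivalent to $W>(\rho_{e}V-x)/(\kappa^{2}\rho_{e}x)$, the event being certain whenever the right-hand side is negative. Conditioning on $V$ and averaging,
\begin{align}\label{pf3b}
F_{\hat\gamma_{eu}}(x)=1-\frac{1}{\Omega_{e}}\int_{0}^{\infty}e^{-v/\Omega_{e}}\,F_{W}\!\left(\frac{\rho_{e}v-x}{\kappa^{2}\rho_{e}x}\right)dv ,
\end{align}
with $F_{W}$ understood to be $0$ on the non-physical region where its argument is negative. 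To surface the Bessel density of (\ref{pf3a}), I would integrate (\ref{pf3b}) by parts in $v$: since the inner argument has derivative $1/(\kappa^{2}x)$ and both boundary contributions vanish ($F_{W}$ is $0$ at the finite point where its argument crosses zero, and $e^{-v/\Omega_{e}}\to0$ as $v\to\infty$), this yields $F_{\hat\gamma_{eu}}(x)=1-\frac{1}{\kappa^{2}x}\int_{0}^{\infty}e^{-v/\Omega_{e}}f_{W}\!\big((\rho_{e}v-x)/(\kappa^{2}\rho_{e}x)\big)\,dv$. Substituting $t=v/\Omega_{e}$ turns this into $1-\frac{\Omega_{e}}{\kappa^{2}x}\int_{0}^{\infty}e^{-t}f_{W}(\Xi_{eu})\,dt$ with $\Xi_{eu}=(\Omega_{e}\rho_{e}t-x)/(\kappa^{2}\rho_{e}x)$, and applying the $D$-point Gauss-Laguerre rule $\int_{0}^{\infty}e^{-t}g(t)\,dt\approx\sum_{d}G_{d}\,g(\tau_{d})$ together with the closed form of $f_{W}$ then yields (\ref{CDF SINR eu}).

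I expect the main obstacle to be twofold. First, the marginalisation in (i) must be steered toward the order-$(Q-1)$ Bessel form of \cite[Eq. (3.471.9)]{gradvstejn2000table}: working instead with the complementary CDF of $W$ lands one on a $K_{Q}$ expression with a different prefactor that does not collapse to (\ref{CDF SINR eu}). Second, the integration-by-parts step is the delicate one---one must verify carefully that both boundary terms vanish and justify treating $F_{W}$ and $f_{W}$ as zero for negative arguments, so that what remains is exactly in Gauss-Laguerre form; once that is secured, the rest is only bookkeeping of constants.
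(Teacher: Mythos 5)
Your proposal is correct and reaches exactly the target expression, but it routes the computation differently from the paper. The paper conditions on the cascaded gain $Z=\left|\mathbf{v}_{p}^{H}\mathbf{\Lambda}_{re}\mathbf{h}_{sr}\right|^{2}$ from the outset: it cites the Bessel-$K$ PDF of $Z$, integrates the exponential CDF of $|h_e|^2$ against that PDF, and then performs the change of variables $t=x(\kappa^2\rho_e z+1)/(\Omega_e\rho_e)$ to land directly in Gauss--Laguerre form --- no integration by parts is needed, because the density of the cascaded channel is already the integrand. You instead condition on the direct link $V=|h_e|^2$, which produces $F_W$ rather than $f_W$ inside the integral, and you then integrate by parts to convert it back to the density; this extra step is harmless (the boundary terms do vanish as you argue, since the argument of $F_W$ is negative at $v=0$) and, after the substitution $t=v/\Omega_e$, yields the identical quadrature. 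Two remarks. First, you rederive the Bessel PDF of the cascaded channel from the Gamma-mixture-of-exponentials representation and \cite[Eq.\ (3.471.9)]{gradvstejn2000table}; the paper simply cites \cite{liu2014outage} for it, so your derivation is a useful self-contained supplement and the constants check out. Second, your framing actually handles a point the paper glosses over: under the paper's change of variables the lower integration limit is $x/(\Omega_e\rho_e)$, not $0$, so applying the Gauss--Laguerre rule on $(0,\infty)$ implicitly extends the integrand by zero (equivalently, drops nodes $\tau_d$ with $\Xi_{eu}<0$); in your version the truncation of $F_W$ and $f_W$ at zero is stated explicitly, which makes the approximation step cleaner to justify. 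Net effect: same formula, same quadrature, slightly longer but more careful path.
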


\begin{proof} \label{Proof of lemma3}
\emph{The CDF of SINR for the Eve to decode data signal can be formulate as follows
\begin{align}\label{Proof of lemma3 1}
{F_{{{\hat \gamma }_{eu}}}}\left( x \right) &= \Pr \left( {{{\hat \gamma }_{eu}} < x} \right)\notag \\ &= \Pr \left[ {{{\left| {{h_e}} \right|}^2} < \frac{{x\left( {{\kappa ^2}{{\left| {{\mathbf{v}}_p^H{{\mathbf{\Lambda }}_{re}}{{\mathbf{h}}_{sr}}} \right|}^2}{P_s} + \sigma _e^2} \right)}}{{{P_s}}}} \right].
\end{align}
We suppose that $Z = {\left| {{\mathbf{v}}_p^H{{\mathbf{\Lambda }}_{re}}{{\mathbf{h}}_{sr}}} \right|^2}$, (\ref{Proof of lemma3 1}) can be written as
\begin{align}\label{Proof of lemma3 2}
{F_{{{\hat \gamma }_{eu}}}}\left( x \right) =& \int_0^\infty  {{F_{{{\left| {{h_e}} \right|}^2}}}\left[ {x\left( {{\kappa ^2}{P_s}z + \sigma _e^2} \right)/{P_s}} \right]} {f_Z}\left( z \right)dz \notag \\ =& \int_0^\infty  {\left[ {1 - {e^{ - x\left( {{\kappa ^2}{\rho _e}z + 1} \right)/\left( {{\Omega _e}{\rho _e}} \right)}}} \right]} {f_Z}\left( z \right)dz.
\end{align}
According to \emph{\cite{liu2014outage}}, the PDF of cascaded channels $Z = {\left| {{\mathbf{v}}_p^H{{\mathbf{\Lambda }}_{re}}{{\mathbf{h}}_{sr}}} \right|^2}$ is shown as
\begin{align}\label{Proof of lemma3 3}
{f_Z}\left( z \right) = \frac{2}{{\Gamma \left( Q \right)}}\sqrt {\frac{{{z^{Q - 1}}}}{{{{\left( {{\Omega _{sr}}{\Omega _{re}}} \right)}^{Q + 1}}}}} {K_{Q - 1}}\left( {2\sqrt {\frac{z}{{{\Omega _{sr}}{\Omega _{re}}}}} } \right).
\end{align}
Upon substituting (\ref{Proof of lemma3 3}) into (\ref{Proof of lemma3 2}), we can obtain
\begin{align}\label{Proof of lemma3 4}
{F_{{{\hat \gamma }_{eu}}}}\left( x \right) =& 1 - \frac{2}{{\Gamma \left( Q \right)}}\int_0^\infty  {{e^{ - \frac{{x\left( {{\kappa ^2}{\rho _e}z + 1} \right)}}{{{\Omega _e}{\rho _e}}}}}}\notag\\  &\times \sqrt {\frac{{{z^{Q - 1}}}}{{{{\left( {{\Omega _{sr}}{\Omega _{re}}} \right)}^{Q + 1}}}}} {K_{Q - 1}}\left( {2\sqrt {\frac{z}{{{\Omega _{sr}}{\Omega _{re}}}}} } \right)dz.
\end{align}
Setting $t = \left[ {x\left( {{\kappa ^2}{\rho _e}z + 1} \right)} \right]/\left( {{\Omega _e}{\rho _e}} \right)$ and $z = \left( {{\Omega _e}{\rho _e}t - x} \right)/\left( {{\kappa ^2}{\rho _e}x} \right)$. Hence, (\ref{Proof of lemma3 4}) can be transformed into
\begin{align}\label{Proof of lemma3 5}
{F_{{{\hat \gamma }_{eu}}}}\left( x \right) =& 1 - \int_0^\infty  {\frac{{2{\Omega _e}{e^{ - t}}}}{{\Gamma \left( Q \right){\kappa ^2}x}}} {K_{Q - 1}}\left( {2\sqrt {\frac{{{\Omega _e}{\rho _e}t - x}}{{{\kappa ^2}{\rho _e}x{\Omega _{sr}}{\Omega _{re}}}}} } \right) \notag \\ &\times \sqrt {\frac{{{{\left( {{\Omega _e}{\rho _e}t - x} \right)}^{Q - 1}}}}{{{{\left( {{\kappa ^2}{\rho _e}x} \right)}^{Q - 1}}{{\left( {{\Omega _{sr}}{\Omega _{re}}} \right)}^{Q + 1}}}}}dt.
\end{align}
With the assistance of Gauss-Laguerre integration, we can acquire (\ref{CDF SINR eu}) and the proof is completed.}
\end{proof}

\begin{lemma} \label{Lemma4}
By utilizing the on-off control scheme, the CDF of SINR for the Eve to decode backscatter signal with ipSIC in RIS-AmBC networks is given by
\begin{align}\label{CDF SINR ec ipsic}
{F_{\hat \gamma _{ec}^{ipSIC}}}\left( x \right) = 1 - \sum\limits_{d = 0}^D {\frac{{2{G_d}}}{{\Gamma \left( Q \right)}}} {\left( {\Xi _{ec}^{ipSIC}} \right)^{\frac{Q}{2}}}{K_Q}\left( {2\sqrt {\Xi _{ec}^{ipSIC}} } \right),
\end{align}
where $\Xi _{ec}^{ipSIC} = \left[ {x\left( {\varpi {\rho _e}{\Omega _{ipe}}{\tau _d} + 1} \right)} \right]/\left( {{\kappa ^2}{\rho _e}{\Omega _{sr}}{\Omega _{re}}} \right)$.
\end{lemma}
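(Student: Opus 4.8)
\textbf{Proof proposal for Lemma \ref{Lemma4}.}

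The plan is to mirror the structure of the proof of Lemma \ref{Lemma3}, since $\hat\gamma_{ec}^{ipSIC}$ has the same cascaded channel $Z=\left|{\mathbf{v}}_p^H{\mathbf{\Lambda}}_{re}{\mathbf{h}}_{sr}\right|^2$ in the numerator and an independent exponential residual-interference term $\left|h_{ipe}\right|^2$ in the denominator. First I would write
\begin{align*}
{F_{\hat\gamma_{ec}^{ipSIC}}}\left( x \right) = \Pr\left[ Z < \frac{x\left(\varpi{\rho_e}\left|h_{ipe}\right|^2 + 1\right)}{{\kappa^2}{\rho_e}} \right] = \int_0^\infty {f_W}\left( w \right) {F_Z}\!\left[ \frac{x\left(\varpi{\rho_e}w + 1\right)}{{\kappa^2}{\rho_e}} \right] dw,
\end{align*}
where $W=\left|h_{ipe}\right|^2$ has ${f_W}\left( w \right)=\frac{1}{\Omega_{ipe}}e^{-w/\Omega_{ipe}}$, conditioning on the interference variable rather than on $\left|h_e\right|^2$ as in Lemma \ref{Lemma3}.

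Next I need the CDF of the cascaded gain $Z$. Integrating the PDF in (\ref{Proof of lemma3 3}) and using the standard Bessel-function integral $\int_0^\infty t^{\mu-1}{K_\nu}\!\left(2\sqrt{t}\right)dt$-type identity (equivalently \cite[Eq. (8.350.1)]{gradvstejn2000table} combined with the Meijer-$G$/Bessel reduction), one gets a CDF of the form ${F_Z}\left( z \right)=1-\frac{2}{\Gamma(Q)}\left(\frac{z}{\Omega_{sr}\Omega_{re}}\right)^{Q/2}{K_Q}\!\left(2\sqrt{z/(\Omega_{sr}\Omega_{re})}\right)$. Substituting this and ${f_W}$ back in, the "$1$" term integrates to $1$ against the exponential, and the remaining integral is
\begin{align*}
\int_0^\infty \frac{2}{\Gamma(Q)\Omega_{ipe}} e^{-w/\Omega_{ipe}} \left(\frac{x(\varpi\rho_e w+1)}{\kappa^2\rho_e\Omega_{sr}\Omega_{re}}\right)^{Q/2} {K_Q}\!\left(2\sqrt{\frac{x(\varpi\rho_e w+1)}{\kappa^2\rho_e\Omega_{sr}\Omega_{re}}}\right) dw.
\end{align*}
Then I would substitute $t=w/\Omega_{ipe}$ so the exponent becomes $e^{-t}$ and the argument of $K_Q$ becomes $2\sqrt{\Xi_{ec}^{ipSIC}}$ with $\Xi_{ec}^{ipSIC}=\left[x(\varpi\rho_e\Omega_{ipe}t+1)\right]/(\kappa^2\rho_e\Omega_{sr}\Omega_{re})$ exactly as stated, and apply Gauss--Laguerre quadrature (weights $G_d$, nodes $\tau_d$) to the $e^{-t}$-weighted integral, yielding (\ref{CDF SINR ec ipsic}).

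The main obstacle I anticipate is obtaining the closed form of ${F_Z}\left( z \right)$ cleanly: the PDF in (\ref{Proof of lemma3 3}) involves $K_{Q-1}$, and integrating it to a single $K_Q$ term requires the recurrence/integral identity $\int_0^z u^{(Q-1)/2}{K_{Q-1}}\!\left(2\sqrt{u/\Omega}\right)du \propto z^{Q/2}{K_Q}\!\left(2\sqrt{z/\Omega}\right)$ (up to the $\Omega$-powers), which must be normalized so that ${F_Z}(0)=0$ and ${F_Z}(\infty)=1$ hold; a sign or index slip here propagates into the final answer. Everything after that — the exponential cancellation, the change of variables, and the Gauss--Laguerre step — is routine and parallels Lemma \ref{Lemma3} and Lemma \ref{Lemma2}. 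Note also that this derivation is exact (no approximation), consistent with the equality sign in (\ref{CDF SINR ec ipsic}), because the only discretization is the quadrature, which the paper treats as exact in the $D\to\infty$ limit.
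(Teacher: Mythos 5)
Your proposal is correct and follows exactly the template the paper uses for Lemmas \ref{Lemma2} and \ref{Lemma3} (the paper omits an explicit proof of Lemma \ref{Lemma4}, but this is clearly the intended argument): condition on the exponential residual-interference variable, integrate the CDF of the cascaded gain $Z$, substitute $t=w/\Omega_{ipe}$, and apply Gauss--Laguerre quadrature. The one nontrivial step you flag --- that $F_Z(z)=1-\tfrac{2}{\Gamma(Q)}\bigl(z/(\Omega_{sr}\Omega_{re})\bigr)^{Q/2}K_Q\bigl(2\sqrt{z/(\Omega_{sr}\Omega_{re})}\bigr)$ --- indeed follows from $\tfrac{d}{du}\left[u^{Q}K_{Q}(u)\right]=-u^{Q}K_{Q-1}(u)$ together with $u^{Q}K_{Q}(u)\to 2^{Q-1}\Gamma(Q)$ as $u\to 0$, which gives the correct normalization $F_Z(0)=0$, $F_Z(\infty)=1$, and the $\varpi=0$ special case correctly collapses to (\ref{CDF SINR ec psic}) since $\sum_d G_d=1$.
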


When $\varpi  = 0$, the CDF of SINR for the Eve to intercept backscatter signal with pSIC can be given by
\begin{align}\label{CDF SINR ec psic}
{F_{\hat \gamma _{ec}^{pSIC}}}\left( x \right) = 1 - \frac{2}{{\Gamma \left( Q \right)}}{\left( {\Xi _{ec}^{pSIC}} \right)^{\frac{Q}{2}}}{K_Q}\left( {2\sqrt {\Xi _{ec}^{pSIC}} } \right),
\end{align} where $\Xi _{ec}^{pSIC} = x/\left( {{\kappa ^2}{\rho _e}{\Omega _{sr}}{\Omega _{re}}} \right)$.

\subsection{Secrecy Outage Probability Analysis}
Assuming the secrecy outage event occurs when the secrecy rate for LU to decode data signal and backscatter signal is less than the target secrecy rate, i,e., ${C_\zeta } = {\left[ {\log \left( {1 + {\gamma _{u\zeta }}} \right) - \log \left( {1 + {\gamma _{e\zeta }}} \right)} \right]^ + } < {R_\zeta }$, where $\zeta  \in \left\{ {u,c} \right\}$ and ${\left[ a \right]^ + } = \max \left( {a,0} \right)$. Note that the difference between ${\gamma _{u\zeta }}$ and ${\gamma _{e\zeta }}$ is primarily generated by the variables ${{\mathbf{h}}_{ru}}$ and ${{\mathbf{h}}_{re}}$, which are independent of each other \cite{Yingjie2023RISNOMAPLS,2020IRSNOMAPLS}. In addition, the residential interference caused by ipSIC is related to the hardware characteristics of the receiver which can be also regarded as independent variables, which ensures the correctness of subsequent SOP and throughput derivation processes. Hence, the SOP of users can be further calculated by utilizing marginal distributions in the following steps. The SOP expression for LU to detect signal $\zeta$ is given by
\begin{align}\label{SOP 1}
{P_\zeta } &= {\rm{Pr}}\left[ {{\gamma _{u\zeta }} < {2^{{R_\zeta }}}\left( {1 + {\gamma _{e\zeta }}} \right) - 1} \right]\notag \\ &= \int_0^\infty  {{f_{e\zeta }}\left( x \right){F_{\gamma _{u\zeta }^\psi }}\left[ {{2^{{R_\zeta }}}\left( {1 + x} \right) - 1} \right]} dx,
\end{align}where $\psi  \in \left\{ {\rm{ipSIC,pSIC}} \right\}$, $\zeta  \in \left\{ {u,c} \right\}$. ${F_{\gamma _{u\zeta }^\psi }}\left( x \right)$ can be acquired from Lemma \ref{Lemma1} and Lemma \ref{Lemma2}. ${{f_{e\zeta }}\left( x \right)}$ denotes the PDF of the SINR for the Eve to decode signal $\zeta$ which can be obtained by taking the derivative of (\ref{CDF SINR eu}), (\ref{CDF SINR ec ipsic}) and (\ref{CDF SINR ec psic}). It can be seen that the integral formula in (\ref{SOP 1}) can be difficult to solve precisely. An approximate method is developed to obtained the asymptotic SOP expression with a verified accuracy.
\begin{theorem} \label{Theorem1}
The SOP of LU to decode data signal in RIS-AmBC networks can be approximated as
\begin{align}\label{SOP uu}
{P_u}\left( {{R_u}} \right) \approx 1 - \sum\limits_{d = 0}^D {\frac{{G_d}{{\left( {{\tau _d}} \right)}^\alpha }}{{\Gamma \left( {\alpha  + 1} \right)}}{e^{ - \frac{{{\varepsilon _{u}}\left( {{\kappa ^2}{\beta ^2}\tau _d^2\rho  + 1} \right)}}{{{\Omega _u}\rho }}}}},
\end{align}where ${\varepsilon _{u}} = {2^{{R_u}}}\left[ {1 + \left( {{\rho _e}{\Omega _e}} \right)/\left( {{\kappa ^2}Q{\Omega _{sr}}{\Omega _{re}}{\rho _e} + 1} \right)} \right] - 1$.
\end{theorem}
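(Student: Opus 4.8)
The plan is to start from the SOP integral in (\ref{SOP 1}) specialised to $\zeta = u$, namely
$P_u(R_u) = \int_0^\infty f_{eu}(x)\, F_{\gamma_{uu}}\!\left[2^{R_u}(1+x)-1\right] dx$,
and to note that substituting the exact eavesdropping PDF $f_{eu}$ — obtained by differentiating (\ref{CDF SINR eu}), which carries a modified Bessel kernel $K_{Q-1}(\cdot)$ — renders the integral intractable. The idea is therefore to replace the random eavesdropping SINR $\hat\gamma_{eu}$ by a single deterministic surrogate $\bar\gamma_{eu}$; then, since $\int_0^\infty f_{eu}(x)\, dx = 1$, the slowly varying factor $F_{\gamma_{uu}}$ can be pulled out of the integral, giving $P_u(R_u) \approx \Pr\!\left[\gamma_{uu} < 2^{R_u}(1+\bar\gamma_{eu})-1\right] = F_{\gamma_{uu}}\!\left[2^{R_u}(1+\bar\gamma_{eu})-1\right]$.

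Next I would identify $\bar\gamma_{eu}$ as the first-order mean of $\hat\gamma_{eu}$. Writing $\hat\gamma_{eu} = \rho_e|h_e|^2/\!\left(\kappa^2\rho_e Z + 1\right)$ with $Z = |\mathbf{v}_p^H\mathbf{\Lambda}_{re}\mathbf{h}_{sr}|^2$, I would evaluate the relevant moments: $\E[|h_e|^2] = \Omega_e$, and, since under the on-off control scheme $\mathbf{v}_p^H\mathbf{\Lambda}_{re}\mathbf{h}_{sr}$ reduces to a sum of $Q$ independent zero-mean products $(h_{re}^m)^* h_{sr}^m$, the cross terms vanish by independence and $\E[Z] = Q\,\Omega_{sr}\Omega_{re}$. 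Applying the standard approximation $\E[A/B]\approx \E[A]/\E[B]$ then yields $\bar\gamma_{eu} = \E[\hat\gamma_{eu}] \approx \rho_e\Omega_e/\!\left(\kappa^2\rho_e Q\Omega_{sr}\Omega_{re}+1\right)$, so that $2^{R_u}(1+\bar\gamma_{eu})-1$ coincides exactly with the constant $\varepsilon_u$ defined in the statement.

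Finally I would insert $x = \varepsilon_u$ into the CDF $F_{\gamma_{uu}}$ furnished by Lemma \ref{Lemma1}, equation (\ref{CDF SINR uu}), which immediately produces (\ref{SOP uu}). Two sanity checks round off the argument: $\varepsilon_u > 0$ whenever $R_u > 0$, so $F_{\gamma_{uu}}$ is evaluated at a legitimate point; and as $D\to\infty$ the Gauss--Laguerre quadrature in (\ref{CDF SINR uu}) becomes exact, so the lone genuine approximation is the mean-substitution step.

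The main obstacle is precisely that last step: justifying that replacing the random $\hat\gamma_{eu}$ by its mean inside the nonlinear map $F_{\gamma_{uu}}(\cdot)$ is accurate, since there is no exact identity here. The heuristic is supported by the concentration of the cascaded gain $Z$ around $Q\Omega_{sr}\Omega_{re}$ as the number of active elements grows — which makes the denominator of $\hat\gamma_{eu}$ nearly deterministic — together with the smoothing effect of integrating the bounded function $F_{\gamma_{uu}}$ against $f_{eu}$. A fully rigorous bound on the discrepancy would proceed via a Taylor/Jensen estimate on $F_{\gamma_{uu}}$ controlled by the variance of $\hat\gamma_{eu}$, but for the approximate expression claimed in Theorem \ref{Theorem1} the mean-substitution suffices, with its accuracy to be confirmed numerically in Section \ref{SectionIV}.
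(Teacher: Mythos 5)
Your proposal is correct and follows essentially the same route as the paper's Appendix A: replace the Eve's SINR by the surrogate $\bar\gamma_{eu}=\rho_e\Omega_e/(\kappa^2\rho_e Q\Omega_{sr}\Omega_{re}+1)$ built from $\E[|h_e|^2]=\Omega_e$ and $\E[Z]=Q\Omega_{sr}\Omega_{re}$, then evaluate $F_{\gamma_{uu}}$ at $\varepsilon_u$ via Gauss--Laguerre quadrature. The only cosmetic differences are that you obtain $\E[Z]$ by independence of the $Q$ active products rather than via the Bessel-kernel integral \cite[Eq.~(6.561.16)]{gradvstejn2000table}, and you state explicitly the mean-substitution step that the paper leaves implicit.
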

\begin{proof}
\emph{See Appendix A.}
\end{proof}
\begin{theorem}\label{Theorem2}
The SOP of LU to decode the backscatter signal with ipSIC in RIS-AmBC networks can be approximated as
\begin{align}\label{SOP uc ipsic}
P_c^{ipSIC}\left( {{R_c}} \right) \approx \sum\limits_{d = 0}^D {\frac{{{G_d}}}{{\Gamma \left( {\alpha  + 1} \right)}}\gamma \left[ {\alpha  + 1,\sqrt {\frac{{{\varepsilon _{c1}}\left( {\varpi \rho {\Omega _{ipu}}{\tau _d} + 1} \right)}}{{{\beta ^2}}}} } \right]} ,
\end{align}where ${\varepsilon _{c1}} = \left\{ {{2^{{R_c}}}\left[ {1 + \left( {{\kappa ^2}{\rho _e}Q{\Omega _{sr}}{\Omega _{re}}} \right)/\left( {\varpi {\rho _e}{\Omega _{ipe}} + 1} \right)} \right] - 1} \right\}$ $/\left( {{\kappa ^2}\rho } \right)$.
\end{theorem}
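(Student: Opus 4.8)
The plan is to follow the same route used for Theorem \ref{Theorem1} in Appendix A, i.e. start from the generic SOP integral and then reduce the eavesdropping side to a single representative value. Specialising (\ref{SOP 1}) to $\zeta=c$ and $\psi=\mathrm{ipSIC}$ gives
\begin{align}
P_c^{ipSIC}(R_c)=\int_0^\infty f_{ec}^{ipSIC}(x)\,F_{\gamma_{uc}^{ipSIC}}\!\big[2^{R_c}(1+x)-1\big]\,dx,\notag
\end{align}
where $F_{\gamma_{uc}^{ipSIC}}$ is supplied by Lemma \ref{Lemma2} and $f_{ec}^{ipSIC}$ is the derivative of (\ref{CDF SINR ec ipsic}). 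Since that density carries a modified Bessel kernel $K_Q(\cdot)$ composed with a square-root argument, while $F_{\gamma_{uc}^{ipSIC}}$ itself is a finite sum of lower incomplete gamma functions with square-root arguments, the integrand has no elementary antiderivative, so an approximation is unavoidable and the theorem is stated with an ``$\approx$''.

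The key step is to replace the eavesdropping SINR $\hat\gamma_{ec}^{ipSIC}$ in (\ref{SINR ec 1bit}) by its approximate mean, equivalently treating $f_{ec}^{ipSIC}$ as a point mass at that mean. Because the numerator and the denominator of $\hat\gamma_{ec}^{ipSIC}$ depend on disjoint sets of independent random variables, and since $\mathbb{E}\big[|\mathbf{v}_p^H\mathbf{\Lambda}_{re}\mathbf{h}_{sr}|^2\big]=Q\Omega_{sr}\Omega_{re}$ — the cascaded term being a sum of $Q$ independent zero-mean products, so that only the $Q$ diagonal second moments survive — together with $\mathbb{E}[|h_{ipe}|^2]=\Omega_{ipe}$ and the ratio-of-means surrogate $\mathbb{E}[A/B]\approx\mathbb{E}[A]/\mathbb{E}[B]$, I would take
\begin{align}
\mathbb{E}\big[\hat\gamma_{ec}^{ipSIC}\big]\approx\mu_{ec}:=\frac{\kappa^2\rho_e Q\Omega_{sr}\Omega_{re}}{\varpi\rho_e\Omega_{ipe}+1}.\notag
\end{align}
With this substitution the integral collapses to $F_{\gamma_{uc}^{ipSIC}}\big[2^{R_c}(1+\mu_{ec})-1\big]$.

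It then remains to plug the threshold $x\mapsto 2^{R_c}(1+\mu_{ec})-1$ into the expression of Lemma \ref{Lemma2}: factoring the common $1/(\kappa^2\rho)$ out of the argument of the lower incomplete gamma function and setting $\varepsilon_{c1}=[2^{R_c}(1+\mu_{ec})-1]/(\kappa^2\rho)$ reproduces (\ref{SOP uc ipsic}) verbatim. As a sanity check, letting $\varpi\to 0$ sends $\mu_{ec}$ to $\kappa^2\rho_e Q\Omega_{sr}\Omega_{re}$ and collapses the $d$-sum, giving the pSIC companion built from (\ref{CDF SINR uc psic}) and (\ref{CDF SINR ec psic}).

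The main obstacle is justifying this mean-value reduction: replacing the Eve SINR by its mean, and especially the ratio-of-means step, is a heuristic rather than a provable bound, so the approximation cannot be certified analytically. I would argue informally that $|\mathbf{v}_p^H\mathbf{\Lambda}_{re}\mathbf{h}_{sr}|^2$ concentrates around $Q\Omega_{sr}\Omega_{re}$ once $Q$ is moderately large, so that $f_{ec}^{ipSIC}$ is indeed sharply peaked and the collapse is tight, and otherwise defer to the numerical validation in Section \ref{SectionIV}; no explicit error term is claimed.
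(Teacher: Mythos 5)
Your proposal is correct and follows essentially the same route the paper uses: the paper omits an explicit proof of Theorem~\ref{Theorem2} but its Appendix~A template for Theorem~\ref{Theorem1} is exactly your argument --- compute $\mathbb{E}\left[\left|\mathbf{v}_p^H\mathbf{\Lambda}_{re}\mathbf{h}_{sr}\right|^2\right]=Q\Omega_{sr}\Omega_{re}$, collapse Eve's SINR to the ratio-of-means surrogate, and evaluate the legitimate-link CDF from Lemma~\ref{Lemma2} at the resulting threshold, which yields $\varepsilon_{c1}$ verbatim. Your candid acknowledgement that the mean-value/ratio-of-means reduction is a heuristic validated only numerically also matches the paper, which likewise offers no analytic error bound.
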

When $\varpi  = 0$, the SOP of LU to decode the backscatter signal with pSIC in RIS-AmBC networks can be approximated as
\begin{align}\label{SOP uc psic}
P_c^{pSIC}\left( {{R_c}} \right) \approx \frac{1}{{\Gamma \left( {\alpha  + 1} \right)}}\gamma \left( {\alpha  + 1,\frac{1}{\beta }\sqrt {{\varepsilon _{c2}}} } \right),
\end{align}where ${\varepsilon _{c2}} = \left[ {{2^{{R_c}}}\left( {1 +  {{\kappa ^2}{\rho _e}Q{\Omega _{sr}}{\Omega _{re}}} } \right) - 1} \right]$ $/\left( {{\kappa ^2}\rho } \right)$.

\subsection{Secrecy Diversity Order}
To reap more insights, the secrecy performance of RIS-AmBC networks is further analyzed in high-SNR region. The secrecy diversity order can be defined as follows \cite{2022XinweiYueRISNOMA}
\begin{align}\label{div define}
Div =  - \mathop {\lim }\limits_{\rho  \to \infty } \frac{{\log \left[ {{P_\infty }\left( \rho  \right)} \right]}}{{\log \left( \rho  \right)}},
\end{align}where ${{P_\infty }\left( \rho  \right)}$ refers to the asymptotic SOP at high SNRs.
\begin{corollary}\label{Corollary1}
Conditioned on $\rho  \to \infty $, the asymptotic SOP of LU to decode the data signal in RIS-AmBC networks can be approximated by
\begin{align}\label{asy sop u}
{P_{u,\infty }}\left( {{R_u}} \right) \approx 1 - \sum\limits_{d = 0}^D {\frac{{{G_d}{{\left( {{\tau _d}} \right)}^\alpha }}}{{\Gamma \left( {\alpha  + 1} \right)}}{e^{ - \frac{{{\varepsilon _{u}}{\kappa ^2}{\beta ^2}\tau _d^2}}{{{\Omega _u}}}}}}.
\end{align}
\end{corollary}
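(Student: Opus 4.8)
The plan is to take the exact expression for the SOP of the data signal given in Theorem~\ref{Theorem1}, namely
\begin{align}\label{plan-starting-point}
{P_u}\left( {{R_u}} \right) \approx 1 - \sum\limits_{d = 0}^D {\frac{{{G_d}{{\left( {{\tau _d}} \right)}^\alpha }}}{{\Gamma \left( {\alpha  + 1} \right)}}{e^{ - \frac{{{\varepsilon _{u}}\left( {{\kappa ^2}{\beta ^2}\tau _d^2\rho  + 1} \right)}}{{{\Omega _u}\rho }}}}},
\end{align}
and study its behaviour as $\rho \to \infty$. The key observation is that $\varepsilon_u = 2^{R_u}\left[1 + \left({\rho_e}{\Omega_e}\right)/\left({\kappa^2}Q{\Omega_{sr}}{\Omega_{re}}{\rho_e} + 1\right)\right] - 1$ does not depend on $\rho$ (the transmit SNR of the legitimate link) — it only involves $\rho_e$, which is held fixed — so the only $\rho$-dependence in \eqref{plan-starting-point} sits inside the exponent through the term $\left({\kappa^2}{\beta^2}\tau_d^2\rho + 1\right)/\left({\Omega_u}\rho\right)$.

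First I would simplify that exponent by splitting the fraction:
\begin{align}\label{plan-split}
\frac{{{\varepsilon _{u}}\left( {{\kappa ^2}{\beta ^2}\tau _d^2\rho  + 1} \right)}}{{{\Omega _u}\rho }} = \frac{{{\varepsilon _{u}}{\kappa ^2}{\beta ^2}\tau _d^2}}{{{\Omega _u}}} + \frac{{{\varepsilon _{u}}}}{{{\Omega _u}\rho }}.
\end{align}
As $\rho \to \infty$, the second term $\varepsilon_u/(\Omega_u \rho) \to 0$, so $e^{-\varepsilon_u/(\Omega_u\rho)} \to 1$ for each fixed $d$. Substituting this limit termwise into the finite sum over $d = 0,\dots,D$ (which is legitimate since $D$ is finite and each summand is continuous in $\rho$) immediately yields \eqref{asy sop u}. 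I would also note that the limit of the full exponential factor can be justified either termwise as above or, if one prefers, by a dominated-convergence style remark, but for a finite sum the termwise argument suffices.

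**The main obstacle** is essentially presentational rather than mathematical: one must make clear \emph{why} $\varepsilon_u$ carries no dependence on $\rho$, i.e., that the eavesdropper's SINR statistics used in Theorem~\ref{Theorem1} were evaluated at the fixed eavesdropping SNR $\rho_e = P_s/\sigma_e^2$ and that the high-SNR limit in \eqref{div define} is taken only over the legitimate transmit SNR $\rho = P_s/\sigma_u^2$. Once that is stated, the corollary follows in one line from \eqref{plan-split}. A secondary point worth flagging — and which motivates the corollary — is that the resulting asymptotic expression \eqref{asy sop u} is a strictly positive constant independent of $\rho$ (each exponential term is positive and the Gauss--Laguerre weights $G_d$ are positive), so that $\log[P_{u,\infty}(\rho)]/\log(\rho) \to 0$ and hence the secrecy diversity order for the data signal is \emph{zero}; I would anticipate this being drawn out in the discussion immediately following the corollary.
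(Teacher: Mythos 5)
Your argument is correct and arrives at (\ref{asy sop u}), but it is not the route the paper takes. You start from the finite Gauss--Laguerre sum already established in Theorem~\ref{Theorem1}, split the exponent so that the $\rho$-dependence isolates into the factor $e^{-\varepsilon_u/(\Omega_u\rho)}$ (which is common to every term of the sum, since it contains no $\tau_d$), and let that factor tend to $1$ as $\rho\to\infty$; this is an exact termwise limit of a finite sum. The paper instead returns to the SINR in (\ref{SINR uu}), drops the noise $\sigma_u^2$ against the backscatter interference so that $\gamma_{uu}^{\infty}=|h_u|^2/[\kappa^2(\sum_{m=1}^{M}|h_{ru}^m h_{sr}^m|)^2]$, re-evaluates the outage integral $\int_0^\infty f_Y(y)F_X(\varepsilon_u\kappa^2 y^2)\,dy$ using the PDF (\ref{a2}), and then applies Gauss--Laguerre quadrature afresh with $t=y/\beta$. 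The two routes commute and yield the identical expression. Yours is the more economical, needing no fresh integration and no replacement of the SINR by its limit inside a probability; the paper's version has the merit of making the physical origin of the error floor explicit (the noise disappears from the SINR but the backscatter self-interference does not), which is what \textbf{Remark \ref{remark1}} then exploits. Your closing observation that (\ref{asy sop u}) is a strictly positive constant in $\rho$, hence zero secrecy diversity order, is precisely the content of that remark, and your caveat that $\varepsilon_u$ must be read as $\rho$-independent (i.e., $\rho_e$ held fixed while $\rho\to\infty$) matches how the paper uses the result, e.g., fixing $\rho_e$ in the simulations while sweeping $P_s$.
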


\begin{proof} \label{Proof of corollary1}
\emph{
Under the condition $\rho  \to \infty $, the SINR for LU to decode data signal can be recast as
$\gamma _{uu}^\infty  = {\left| {{h_u}} \right|^2}/{\left( {\sum\nolimits_{m = 1}^M {\left| {h_{ru}^mh_{br}^m} \right|} } \right)^2}$. Hence, the SOP expression is shown as
\begin{align}\label{SINR uu proof}
{P_{u,\infty }}\left( {{R_u}} \right) &= {P_r}\left( {{C_u} < {R_u}} \right) \notag \\ &= {P_r}\left[ {\gamma _{uu}^\infty  < {2^{{R_u}}}\left( {1 + \gamma _{eu} } \right) - 1} \right] \notag \\
&= \int_0^\infty  {{f_Y}\left( y \right){F_X}} \left( {{\varepsilon _u}{\kappa ^2}{y^2}} \right)dy \notag \\
&= 1 - \int_0^\infty  {\frac{{{y^\alpha }{e^{ - \frac{y}{\beta }}}}}{{{\beta ^{\alpha  + 1}}\Gamma \left( {\alpha  + 1} \right)}}} {e^{ - \frac{{{\varepsilon _u}{\kappa ^2}{y^2}}}{{{\Omega _u}}}}}dy,
\end{align}where $X = {\left| {{h_u}} \right|^2}$, $Y = {\left| {\sum\nolimits_{m = 1}^M {\left| {h_{ru}^mh_{sr}^m} \right|} } \right|}$. Upon adopting Gauss-Laguerre approximation, the integral above can be resolved by replacing $y/{\beta}$ with $t$ and (33) is obtained. The proof is completed.
}
\end{proof}
\begin{remark}\label{remark1}
By substituting (\ref{asy sop u}) into (\ref{div define}), the secrecy diversity order for LU to decode data signal is equal to zero. This is due to the fact that the backscatter signal is regarded as interference when LU carries out SIC process and thus results in the appearance of error floor.
\end{remark}
\begin{corollary}\label{Corollary2}
Conditioned on $\rho  \to \infty $, the asymptotic SOP of LU to decode backscatter signal with ipSIC in RIS-AmBC networks can be given by
\begin{align}\label{asy sop c ipSIC}
P_{c,\infty }^{ipSIC}\left( {{R_c}} \right) \approx \sum\limits_{d = 0}^D {\frac{{{G_d}}}{{\Gamma \left( {\alpha  + 1} \right)}}\gamma \left( {\alpha  + 1,\sqrt {\frac{{{\varepsilon _{c3}}\varpi {\Omega _{ipu}}{\tau _d}}}{{{\beta ^2}}}} } \right)} ,
\end{align}where ${\varepsilon _{c3}} =  \left\{ {2^{{R_c}}}\left[ {1 +  \left({{\kappa ^2}{\rho _e}Q{\Omega _{sr}}{\Omega _{re}}}\right)/\left({\varpi {\rho _e}{\Omega _{ipe}}}\right)} \right] - 1 \right\}$ $/{\kappa ^2}$.
\end{corollary}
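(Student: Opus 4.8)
The plan is to obtain (\ref{asy sop c ipSIC}) directly from Theorem~\ref{Theorem2} by isolating the dominant terms as the transmit power grows without bound. The crucial observation is that $\rho = P_s/\sigma_u^2$ and $\rho_e = P_s/\sigma_e^2$, so the regime $\rho \to \infty$ is the same as $P_s \to \infty$, which simultaneously forces $\rho_e \to \infty$; keeping track of this joint scaling is what makes the argument go through. Starting from $P_c^{ipSIC}(R_c) \approx \sum_{d=0}^D \frac{G_d}{\Gamma(\alpha+1)}\gamma\left[\alpha+1,\sqrt{\varepsilon_{c1}(\varpi\rho\Omega_{ipu}\tau_d+1)/\beta^2}\right]$ with $\varepsilon_{c1} = \left\{2^{R_c}\left[1 + \kappa^2\rho_e Q\Omega_{sr}\Omega_{re}/(\varpi\rho_e\Omega_{ipe}+1)\right]-1\right\}/(\kappa^2\rho)$, I would first simplify the Eve-side factor: since $\varpi \neq 0$ in the ipSIC case, $\varpi\rho_e\Omega_{ipe}+1 \to \varpi\rho_e\Omega_{ipe}$ as $\rho_e\to\infty$, and the ratio $\kappa^2\rho_e Q\Omega_{sr}\Omega_{re}/(\varpi\rho_e\Omega_{ipe})$ stays $O(1)$ because $\rho_e$ cancels. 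Hence $\kappa^2\rho\,\varepsilon_{c1} \to 2^{R_c}\left[1 + \kappa^2 Q\Omega_{sr}\Omega_{re}/(\varpi\Omega_{ipe})\right]-1 = \kappa^2\varepsilon_{c3}$ (which coincides with the $\varepsilon_{c3}$ of the statement after cancelling $\rho_e$), i.e.\ $\varepsilon_{c1}$ behaves like $\varepsilon_{c3}/\rho$ to leading order.

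The second step handles the argument of the lower incomplete gamma function. Writing $\varepsilon_{c1}(\varpi\rho\Omega_{ipu}\tau_d+1) = \varepsilon_{c1}\varpi\rho\Omega_{ipu}\tau_d + \varepsilon_{c1}$, the first summand tends to $\varepsilon_{c3}\varpi\Omega_{ipu}\tau_d$ by the previous step, while the second is $O(1/\rho)$ and vanishes. Since the outer sum over $d$ is finite ($D$ is fixed) and $\gamma(\alpha+1,\cdot)$ is continuous, the limit may be passed inside term by term, so $\sqrt{\varepsilon_{c1}(\varpi\rho\Omega_{ipu}\tau_d+1)/\beta^2}\to\sqrt{\varepsilon_{c3}\varpi\Omega_{ipu}\tau_d/\beta^2}$, which yields (\ref{asy sop c ipSIC}).

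For robustness I would also sketch an independent from-scratch derivation in the spirit of the proof of Corollary~\ref{Corollary1}: in the high-SNR limit the noise term in (\ref{SINR uc ipsic}) is negligible against $\varpi\rho|h_{ipu}|^2$, so $\gamma_{uc}^{ipSIC,\infty} = \kappa^2\left(\sum_{m=1}^M|h_{ru}^m h_{sr}^m|\right)^2/(\varpi|h_{ipu}|^2)$; replacing the Eve SINR $\hat\gamma_{ec}^{ipSIC}$ by the same deterministic surrogate used in Theorem~\ref{Theorem2}, conditioning on $|h_{ipu}|^2 = z$ with density $e^{-z/\Omega_{ipu}}/\Omega_{ipu}$, invoking $F_Y(y) = \gamma(\alpha+1,y/\beta)/\Gamma(\alpha+1)$ from (\ref{b3}), substituting $t = z/\Omega_{ipu}$ and applying Gauss--Laguerre quadrature reproduces the same expression. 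Either way, the main obstacle is bookkeeping rather than calculation: one must recognise that $\rho_e$ diverges together with $\rho$ so that the Eve-side interference-plus-noise collapses to $\varpi\rho_e\Omega_{ipe}$, while simultaneously verifying that $\kappa^2\rho_e Q\Omega_{sr}\Omega_{re}/(\varpi\rho_e\Omega_{ipe})$ remains bounded; mishandling this scaling would spuriously drive the argument of $\gamma(\cdot,\cdot)$ to $0$ or $\infty$. It is worth flagging that the argument uses $\varpi \neq 0$ (the $\varpi = 0$, pSIC case is treated separately), and that the limiting expression is independent of $\rho$, which is exactly why the ipSIC secrecy diversity order reduces to an error floor.
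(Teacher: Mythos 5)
Your proposal is correct, and your primary route is genuinely different from the paper's. The paper does not take a limit of Theorem~\ref{Theorem2} at all: it restarts from the limiting SINR $\gamma_{uc,\infty}^{ipSIC}=\kappa^2\bigl(\sum_{m}|h_{ru}^m h_{sr}^m|\bigr)^2/\bigl(\varpi|h_{ipu}|^2\bigr)$ obtained by dropping the noise term, writes the SOP as $\int_0^\infty f_X(x)F_Y\bigl(\sqrt{\varepsilon_{c3}\varpi x}\bigr)\,dx$ with $X=|h_{ipu}|^2$ and $F_Y$ from (\ref{b3}), and then applies the substitution $t=x/\Omega_{ipu}$ and Gauss--Laguerre quadrature --- exactly the ``from-scratch'' derivation you sketch as your robustness check, so that part of your proposal reproduces the paper's argument. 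Your main route instead extracts the asymptote directly from the closed form (\ref{SOP uc ipsic}): the key bookkeeping steps --- that $\rho_e$ diverges jointly with $\rho$ so that $\varpi\rho_e\Omega_{ipe}+1\sim\varpi\rho_e\Omega_{ipe}$ with the $\rho_e$'s cancelling in the Eve-side ratio, that $\rho\,\varepsilon_{c1}\to\varepsilon_{c3}$, and that $\varepsilon_{c1}(\varpi\rho\Omega_{ipu}\tau_d+1)\to\varepsilon_{c3}\varpi\Omega_{ipu}\tau_d$ with the $O(1/\rho)$ remainder vanishing --- all check out, and passing the limit through the finite sum and the continuous $\gamma(\alpha+1,\cdot)$ is unobjectionable. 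What your route buys is an explicit consistency proof that Corollary~\ref{Corollary2} is the literal $\rho\to\infty$ limit of Theorem~\ref{Theorem2} (something the paper leaves implicit), at the cost of inheriting whatever approximation error the finite-$D$ quadrature in Theorem~\ref{Theorem2} carries; the paper's route is self-contained and introduces the quadrature only once, directly on the limiting integral. Your closing observations --- that the argument requires $\varpi\neq 0$ and that the $\rho$-independence of the limit is what forces the zero secrecy diversity order in Remark~\ref{remark2} --- are also consistent with the paper.
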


\begin{proof} \label{Proof of corollary2}
\emph{
Under the condition $\rho  \to \infty $, the SINR for LU to decode backscatter signal with ipSIC can be recast as
$\gamma _{uc,\infty }^{ipSIC} = {\kappa ^2}{\left| {\sum\nolimits_{m = 1}^M {\left| {h_{ru}^mh_{br}^m} \right|} } \right|^2}/\left( {\varpi {{\left| {{h_{ipu}}} \right|}^2}} \right)$. Hence, the SOP expression within high SNR region is shown as
\begin{align}\label{SINR uc ipsic proof}
P_{c,\infty }^{ipSIC}\left( {{R_c}} \right) &= {P_r}\left[ {\gamma _{uc,\infty }^{ipSIC} < {2^{{R_c}}}\left( {1 + \gamma _{ec}^{ipSIC}} \right) - 1} \right] \notag \\ &= \int_0^\infty  {{f_X}\left( x \right){F_Y}} \left( {\sqrt {{\varepsilon _{c3}}\varpi x} } \right)dx  \notag \\ &= \int_0^\infty  {\frac{{{e^{ - \frac{x}{{{\Omega _{ipu}}}}}}}}{{{\Omega _{ipu}}\Gamma \left( {\alpha  + 1} \right)}}} \gamma \left( {\alpha  + 1,\frac{1}{\beta }\sqrt {{\varepsilon _{c3}}\varpi x} } \right)dx.
\end{align}where $X = {\left| {{h_{ipu}}} \right|^2}$, $Y = {\left| {\sum\nolimits_{m = 1}^M {\left| {h_{ru}^mh_{sr}^m} \right|} } \right|}$. Referring to the procedure in (\ref{SINR uu proof}). We can acquire (\ref{SINR uc ipsic proof}) and the proof is completed.
}
\end{proof}
When $\varpi  = 0$, the asymptotic SOP of LU to decode backscatter signal with pSIC in RIS-AmBC networks when $\rho  \to \infty $ can be given by
\begin{align}\label{asy sop c pSIC}
P_{c,\infty }^{pSIC}\left( {{R_c}} \right) \approx {\left( {\frac{{\sqrt {{\varepsilon _{c2}}} }}{\beta }} \right)^{\alpha  + 1}}\sum\limits_{k = 0}^\infty  {\frac{{{{\left( {{\varepsilon _{c2}}} \right)}^{\frac{k}{2}}}{{e^{ - \frac{{\sqrt {{\varepsilon _{c2}}} }}{\beta }}}}}}{{\Gamma \left( {\alpha  + k + 2} \right)\beta }}}.
\end{align}

\begin{proof} \label{Proof of corollary2}
\emph{
Under the condition $\rho  \to \infty $, the CDF of SINR for LU to decode the backscatter signal with pSIC can be approximatively expressed as \emph{\cite{gautschi1979computational}}
\begin{align}\label{cdf SINR uc psic proof}
F_{\gamma _{uc}^{pSIC}}^\infty \left( x \right) = {\left( {\frac{1}{{\kappa \beta }}\sqrt {\frac{x}{\rho }} } \right)^{\alpha  + 1}}\sum\limits_{k = 0}^\infty  {\frac{{e^{ - \frac{1}{{\kappa \beta }}\sqrt {\frac{x}{\rho }} }}}{{\Gamma \left( {\alpha  + k + 2} \right)}}} {\left( {\frac{1}{{\kappa \beta }}\sqrt {\frac{x}{\rho }} } \right)^k}. \tag{38}
\end{align}
Hence, the asymptotic SOP expression for LU decode backscatter signal with pSIC is shown as
\begin{align}\label{sop uc psic proof}
P_{c,\infty }^{pSIC}\left( {{R_c}} \right) &= {P_r}\left[ {\gamma _{uc}^{pSIC} < {2^{{R_c}}}\left( {1 + \gamma _{ec}^{pSIC}} \right) - 1} \right] \notag \\  &= {F_{\gamma _{uc}^{pSIC}}^\infty}\left( {\rho {\kappa ^2}{\varepsilon _{c2}}} \right) \notag \\ &= {\left( {\frac{1}{\beta }\sqrt {{\varepsilon _{c2}}} } \right)^{\alpha  + 1}}\sum\limits_{k = 0}^\infty  {\frac{{{{\left( {{\varepsilon _{c2}}} \right)}^{\frac{k}{2}}}{e^{ - \frac{1}{\beta }\sqrt {{\varepsilon _{c2}}} }}}}{{\Gamma \left( {\alpha  + k + 2} \right){\beta ^k}}}} . \tag{39}
\end{align} The proof is completed.
}
\end{proof}
\begin{remark}\label{remark2}
By substituting (\ref{asy sop c ipSIC}) into (\ref{div define}), the secrecy diversity order for LU to decode backscatter signal with ipSIC is equal to zero. This is due to the fact that the residential interference caused by ipSIC can hinder LU's decoding process. By substituting (\ref{asy sop c pSIC}) into (\ref{div define}), the secrecy diversity order for LU to decode backscatter signal with pSIC is equal to $M{{{\pi ^2}} \mathord{\left/
 {\vphantom {{{\pi ^2}} {\left[ {2\left( {16 - {\pi ^2}} \right)} \right]}}} \right.
 \kern-\nulldelimiterspace} {\left[ {2\left( {16 - {\pi ^2}} \right)} \right]}}$.
\end{remark}
\subsection{Secrecy Throughput Analysis}
Apart from SOP-based reliability analyses, it is essential to estimate the secure effectiveness of RIS-AmBC networks. As a consequence, we define the secrecy throughput in delay-limited transmission mode of RIS-AmBC networks as follows
\begin{align}\label{SST define}
{T_\psi }\left( {{R_\zeta }} \right) = \left[ {1 - P_\zeta ^\psi \left( {{R_\zeta }} \right)} \right]{R_\zeta },
\end{align}where $\zeta  \in \left\{ {u,c} \right\}$ and $\psi  \in \left\{ \rm{{ipSIC,pSIC}} \right\}$. ${P_\zeta ^\psi \left( {{R_\zeta }} \right)}$ can be obtained from (\ref{SOP uu}), (\ref{SOP uc ipsic}) and (\ref{SOP uc psic}).

\subsection{Secrecy Energy Efficiency Analysis}
In RIS-AmBC networks, the total power consumption is composed of the transmitting power at BS and the hardware static dissipated power at BS, RIS as well as the user terminal \cite{2022XinweiYueRISNOMA,huang2019reconfigurable}, which is shown as
\begin{align}\label{total power consumption}
{P_{tot}} = {\vartheta ^{ - 1}}{P_s} + {P_u} + P_s^{hw} + P_{ris}^{hw} + P_{AmBC}^{hw},
\end{align}where $\vartheta $ denotes the efficiency of the transmitting power amplifier. ${P_u}$ is the hardware static power dissipated by the user terminal. $P_s^{hw}$ and $P_{ris}^{hw}$ represent the total hardware static dissipated power at the BS and RIS. $P_{AmBC}^{hw}$ is the power consumption the analog components of AmBC device, ie., the harvester as well as backscatter transceiver \cite{2018VanHuynhAmBC}. Furthermore, the secrecy energy efficiency of RIS-AmBC is defined as the ratio of secrecy throughput to the total power consumption, which is given by
\begin{align}\label{SEE define}
\eta_{ee}  = \frac{{{T_\psi }\left( {{R_\zeta }} \right)}}{{{P_{tot}}}},
\end{align}where $\zeta  \in \left\{ {u,c} \right\}$ and $\psi  \in \left\{ \rm{{ipSIC,pSIC}} \right\}$. ${{T_\psi }\left( {{R_\zeta }} \right)}$ and ${{P_{tot}}}$ can be acquired from (\ref{total power consumption}) and (\ref{SST define}), respectively.

\section{Numerical Results}\label{SectionIV}
\begin{table}[]
\renewcommand\arraystretch{1.5}
\centering
\caption{The table of Monte Carlo simulation parameters}
\setlength{\tabcolsep}{0.6mm}{		
\begin{tabular}{|l|l|}
\specialrule{0em}{1pt}{1pt}
\hline
Pathloss factor                       & $\lambda  = 2$                                                        \\ \hline
Frequency dependent factor             & $\eta  =  - 30$ dB                                                 \\ \hline
AWGN power                           & ${\sigma_u ^2} = \sigma _e^2 =  - 90$ dBm                                                        \\ \hline
Efficiency of transmitting power amplifier & $\vartheta = 0.32$                                                        \\ \hline
Residential interference                             & ${\Omega_{ipu}} = {\Omega_{ipe}} =  - 90$ dBm              \\ \hline
Target secrecy rate for data signal         & \begin{tabular}[c]{@{}l@{}}$R_u = 0.5$ BPCU\end{tabular}        \\ \hline
Target secrecy rate for backscatter signal         & \begin{tabular}[c]{@{}l@{}}$R_c = 0.1$ BPCU\end{tabular}        \\ \hline
Distance from the source to RIS/LU/Eve               & ${d_{sr}} = {d_{su}} = 20{\text{ m}}$, ${d_{se}} = 30{\text{ m}}$  \\ \hline
Distance from RIS to LU/Eve       & \begin{tabular}[c]{@{}l@{}}${d_{ru}} = 10{\text{ m}}$, ${d_{re}} = 20{\text{ m}}$\end{tabular} \\ \hline
Hardware static power consumption & \begin{tabular}[c]{@{}l@{}}$\{P_{u},P_{s}^{hw},P_{ris}^{hw},P_{AmBC}^{hw}\}$\\=$\{10,4,10,-31\}$ dBm\end{tabular} \\ \hline
\specialrule{0em}{1pt}{1pt}
\end{tabular}}
\end{table}

In this section, numerical results are presented to verify the derivations in Section \ref{SectionIII}. For the convenience of notation, the simulation parameters adopted are summarized in Table I 
\cite{2020WenjingBacandDire,2021JiakouRISBC}, where BPCU represents the abbreviation for bit per channel use. Noise power at LU and Eve is given by $\sigma _u^2 = \sigma _e^2 =  - 170 + 10\log \left( {{f_c}} \right) + {N_f} =  - 90$ dB, where the bandwidth is set to 10 MHz and noise figure is 10 dB \cite{ChaoZhang2022STARNOMA}. To highlight the secrecy performance of RISAmBC networks, the conventional AmBC networks without RIS and jammer-based AmBC networks are both taken into account as benchmarks. Note that the jammer-based model and simulation setups are in accordance with \cite{2023ShaoboAmBCSmartTransport}, where the RIS is replaced with a traditional backscatter device and a cooperative jammer is settled in the cell radiating artificial noise and confusing the Eve.

Fig. \ref{fig_2} plots the SOP versus transmitting power in RIS-AmBC networks, where $P=2$, $\rho_e=20$ dBm and $\kappa=0.5$. The theoretical analyses and simulations are well matched. On one hand, we can see that the SOP of the backscatter signal is much lower compared to conventional AmBC networks. This is because RIS can enhance the cascaded channel condition to improve the reliability of wireless propagation. On the other hand, as the power of backscatter signal increases, the SINR of LU decoding its own data signal decreases according to (\ref{SINR uu}), so that the SOP of the data signal in the traditional AmBC networks is lower than that of the data signal in RIS-AmBC networks. In addition, it is can be observed that when the transmission power is sufficient, the jammer-based secure transmission scheme outperforms the proposed scheme. This is because both the BS and the jammer have enough power to transmit data signals and interfere with Eve's eavesdropping behavior, without worrying about the negative impact of power being shared with each other. However, thanks to the cascaded link gain brought by RIS, the SOP of the backscatter signal in the RIS-AmBC network is significantly lower than that in the jammer-based AmBC network.

From the perspective of systemic SOP, it can be seen from Fig. \ref{fig_2} that the overall outage behaviour of RIS-AmBC networks is better than that of the AmBC networks without RIS assistance. We can also observe that with increasing the number of RIS elements, the secrecy performance of backscatter signal is promoted while that of the data signal is degraded. The reason for this phenomenon is that more RIS elements provide greater reflecting channel gains for backscatter communication links. Based on this, the backscatter signal is regarded as interference by LU when it detects the data signal. Hence, increasing RIS elements is beneficial to improve the received SINR of backscatter signal, but weakens that of data signal.  In addition, the error floors appear in the curves of data and backscatter signals with ipSIC demonstrated in \textbf{Remark \ref{remark1}} and \textbf{Remark \ref{remark2}}.

Fig. \ref{fig_3} plots the SOP versus the number of RIS elements, where $P=2$, $R_u=R_c=0.5$ BPCU, $P_s=30$ dBm, $\rho_e=20$ dBm and $\kappa=0.3$. It can be seen from the figure that with the rising value of $\kappa$, the SOP of the backscatter signal keeps declining. This is because more power is absorbed to convey the backscatter signal and the reception quality of backscatter signal is improved. As the number of RIS elements increases, the system SOP of RIS-AmBC networks first falls and then rises. This is due to the fact that the backscatter signal power is weak at first. By increasing the number of RIS elements, the received power of backscatter signal is gradually strengthened. However, the LU's ability to decode the data signal is greatly impaired when the power of the backscattered signal is too large, which in turn affects the system SOP performance.

\begin{figure}[t!]
    \begin{center}
        \includegraphics[width=3.2in,  height=2.5in]{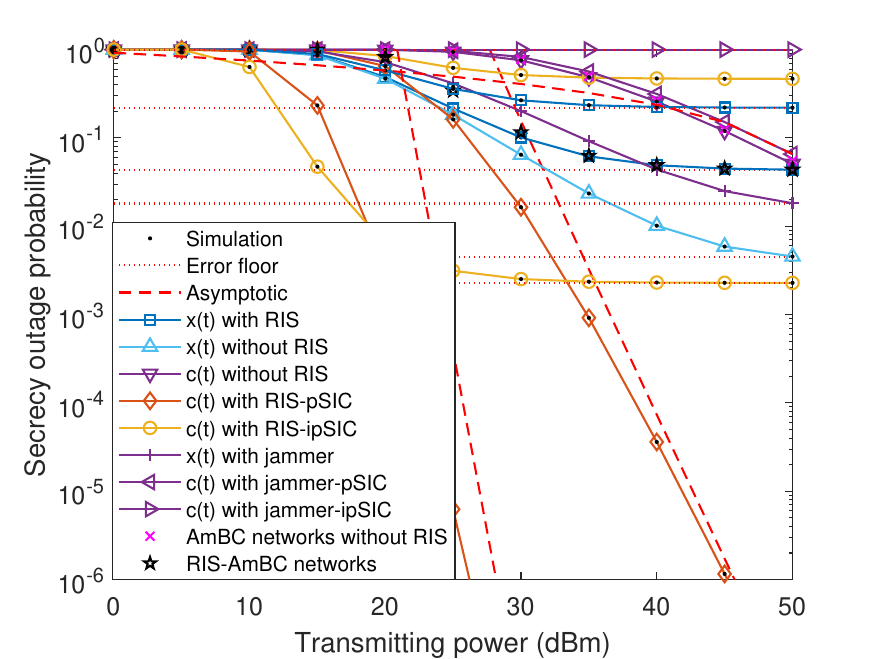}
        \caption{The SOP versus transmitting power in RIS-AmBC networks.
        }
        \label{fig_2}
    \end{center}
\end{figure}

\begin{figure}[t!]
    \begin{center}
        \includegraphics[width=3.2in,  height=2.5in]{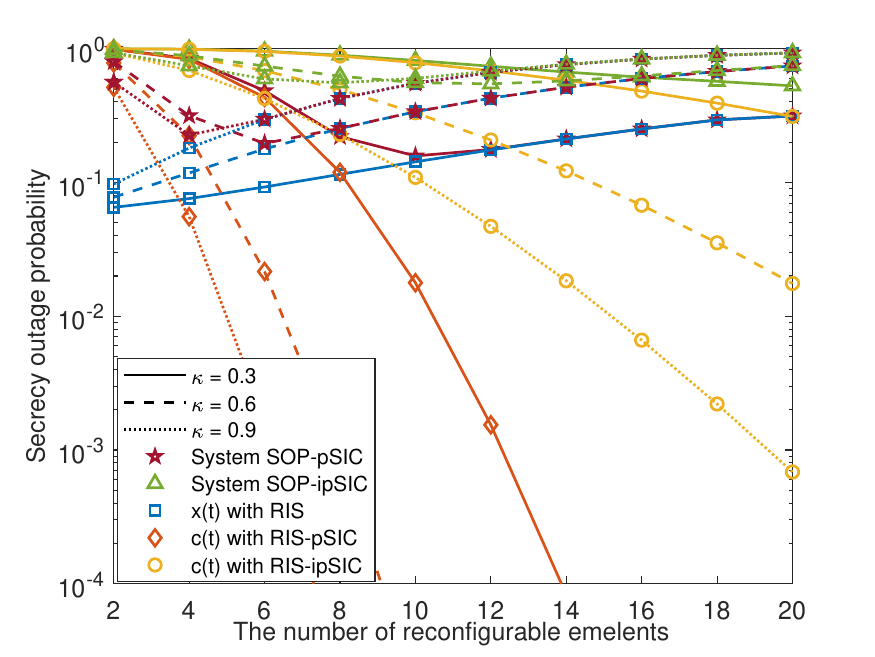}
        \caption{The SOP versus the number of RIS elements in RIS-AmBC networks.
        }
        \label{fig_3}
    \end{center}
\end{figure}

\begin{figure}[ht]
\centering
\subfigure[Backscatter signal]{
\begin{minipage}[t]{0.5\linewidth} 
\centering
\includegraphics[width=1.1\textwidth,height=1\textwidth]{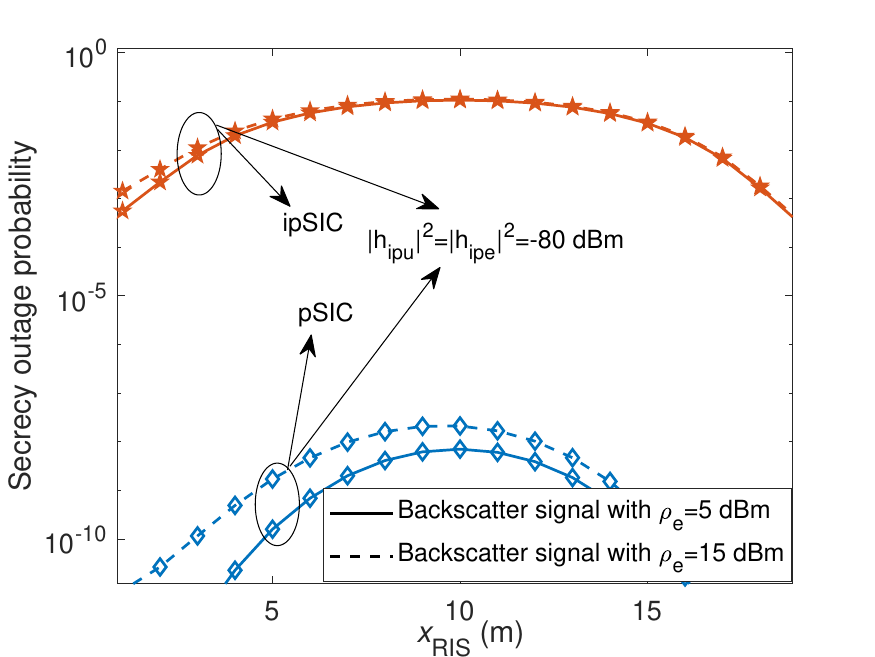}
\end{minipage}
}%
\subfigure[RIS-AmBC networks]{
\begin{minipage}[t]{0.5\linewidth} 
\centering
\includegraphics[width=1.1\textwidth,height=1\textwidth]{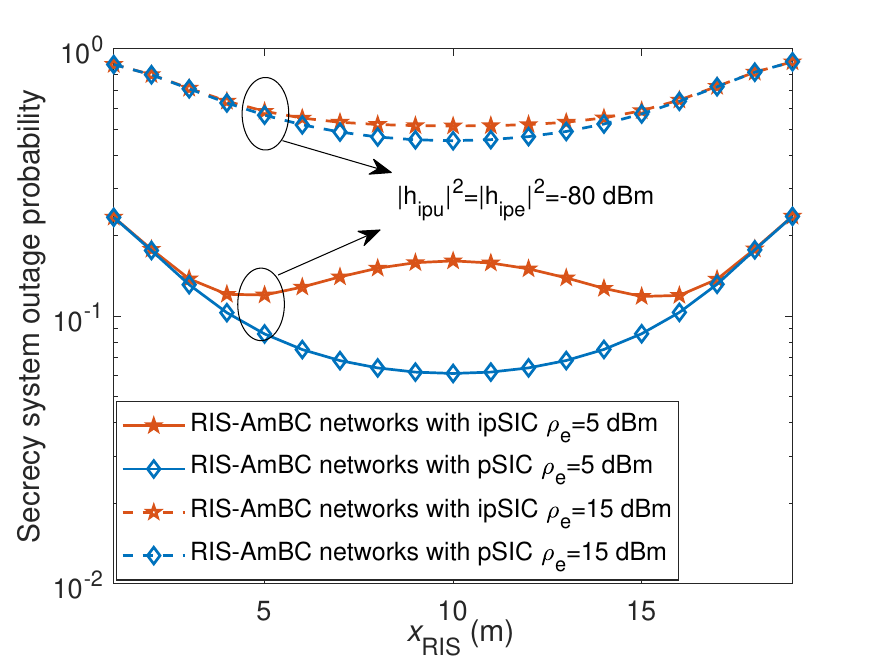}
\end{minipage}
}%

\centering
\caption{The SOP versus the horizontal distance of RIS in RIS-AmBC networks.}
\label{SOP_distance}
\end{figure}

Fig. \ref{SOP_distance} plots the SOP versus the horizontal distance of the RIS in RIS-AmBC networks, where $M=12$, $P=2$, $P_s=20$ dBm and $\kappa=0.5$. For purposes of illustration, the locations of the BS, LU, Eve and the RIS are labelled as (0,0), (20 m,0), (30 m,0) and ($x_{RIS}$,2) respectively in Euclidean coordinate space, where $x_{RIS}$ denotes the horizontal height of the RIS and varies from 1 m and 20 m. From Fig. (a) we can observe that the SOP of the backscatter signal is lower when the RIS is deployed close to the BS/LU. This is due to the fact that when the RIS is in the middle between the BS and the user, the incident and reflecting signals suffer from strong multiplicative fading, which impairs the secrecy rate of the backscatter signal. Furthermore, it can be seen from Fig. (b) that the optimal SOP for RIS-AmBC networks under pSIC can be achieved at the middle position ($x_{RIS}=10$ m), regardless of the wiretapping capability of Eve. This is due to the fact that the SOP of the backscatter signal with pSIC is much lower than that of the data signal, which means that the SOP of RIS-AmBC networks depends mainly on the data signal.  According to Fig. \ref{fig_2}, the data signal and the backscatter signal are just offset from each other, which results in the RIS being best deployed in the center for RIS-AmBC networks with pSIC. However, the SOP of the backscatter signal with ipSIC increases significantly and together with the data signal affects the overall performance of RIS-AmBC networks. In this case, the RIS can be deployed slightly closer to the BS or LU (e.g. $x_{RIS}=4, 14$ m).

\begin{figure}[t!]
    \begin{center}
        \includegraphics[width=3.2in,  height=2.5in]{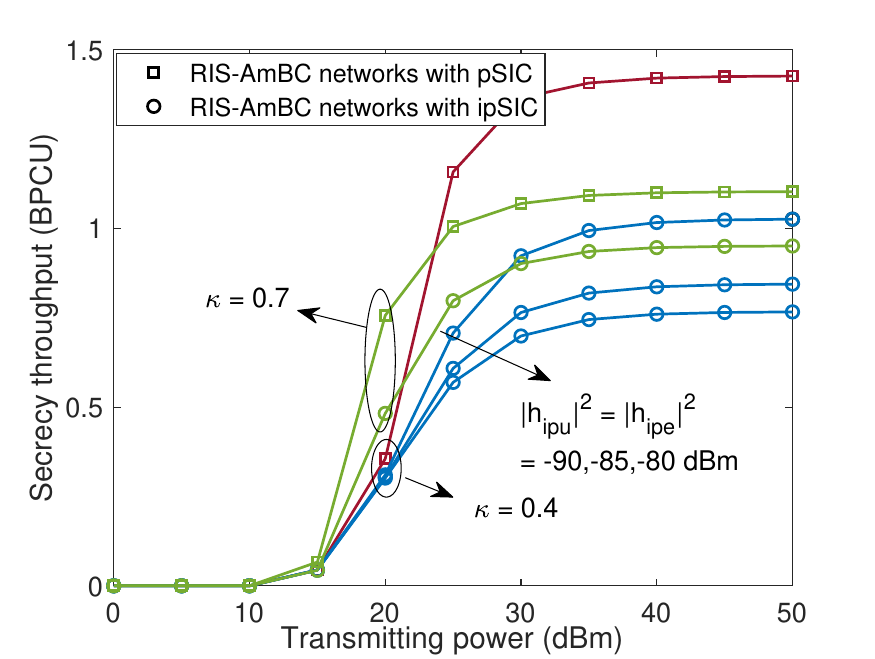}
        \caption{The secrecy throughput versus transmitting power in RIS-AmBC networks.
        }
        \label{SST SNR}
    \end{center}
\end{figure}

Fig. \ref{SST SNR} plots the secrecy throughput versus transmitting power, where $M=12$ $P=2$, $R_u=1$, $R_c=0.7$ BPCU, $\rho_e=20$ dBm and $\kappa=0.3$. It is observed that the secrecy throughput of RIS-AmBC networks with pSIC/ipSIC can not reach the theoretical upper limit (1.7 BPCU). This is because that the backscatter signal and residual interference can both corrupt LU's decoding process and result in error floors, so the SOP of data and backscatter signal will not be zero even with high SNRs. Furthermore, we can also observe that the secrecy throughput is impaired when the level of ipSIC increases. The reason for this is that the LU's received SINR is reduced and the resistance to eavesdropping is weakened.

\begin{figure}[t!]
    \begin{center}
        \includegraphics[width=3.2in,  height=2.5in]{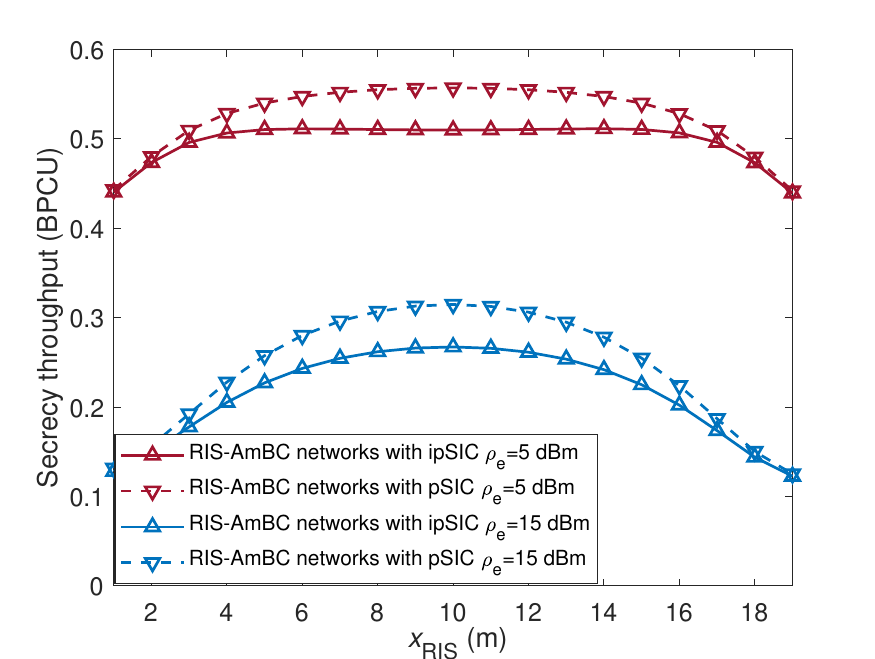}
        \caption{The secrecy throughput versus the horizontal distance in RIS-AmBC networks.
        }
        \label{SST distance}
    \end{center}
\end{figure}

Fig. \ref{SST distance} plots the secrecy throughput versus the horizontal distance of the RIS in RIS-AmBC networks, where $M=12$, $P=2$, $P_s=20$ dBm and $\kappa=0.5$. On one hand, it can be seen from the figure that regardless of the level of Eve's SNR, deploying the RIS midway between the BS and the LU is the optimal strategy to obtain the maximum secrecy throughput of RIS-AmBC networks. This can be explained by the fact that the target rate of the data signal occupies a larger share of the throughput and the SOP of the data signal is lower when the RIS is located near $x = 10$ m. On the other hand, considering the analyses of the relationship between SOP and RIS deployment locations in Fig. \ref{SOP_distance}, the RIS can also be deployed slightly closer to the BS or LU (e.g. $x_{RIS}=6, 12$ m) with a little acceptable loss of secrecy throughput under the practical ipSIC scenarios. In addition, we can observe that residual interference caused by ipSIC attenuates the SINR of the LU when decoding the backscatter signal, thus compromising the safe throughput of RIS-AmBC networks.

\begin{figure}[t!]
    \begin{center}
        \includegraphics[width=3.2in,  height=2.5in]{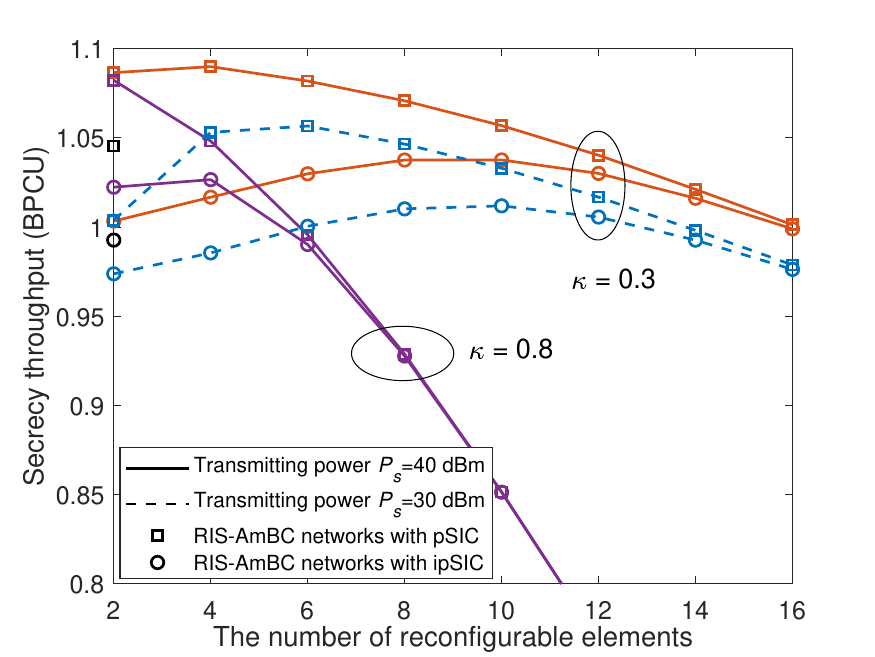}
        \caption{The secrecy throughput versus the number of RIS elements in RIS-AmBC networks.
        }
        \label{SST_M}
    \end{center}
\end{figure}

Fig. \ref{SST_M} plots the secrecy throughput versus the number of RIS elements in RIS-AmBC networks, where $P=2$, $P_s=40$ dBm, $R_u=1$ and $R_c=0.1$. As can be seen from the figure that increasing the reflecting coefficient $\kappa$ reduces the secrecy throughput of RIS-AmBC networks. As $\kappa$ increases, the amplification of the backscatter signal occurs, consequently compromising the SINR for LU during data signal decoding. Additionally, the system throughput is predominantly impacted by the data signal rate. Moreover, the observed trend indicates an initial rise and subsequent decline in the secrecy throughput of RIS-AmBC networks concerning variations in the number of RIS elements. The reason for this is that there exists an optimal value of \emph{M} that minimises the SOP according to Fig. \ref{fig_3} and this in turn will also lead to the existence of a value of \emph{M} that maximises secrecy throughput in delay-limited transmission mode.

\begin{figure}[t!]
    \begin{center}
        \includegraphics[width=3.2in,  height=2.5in]{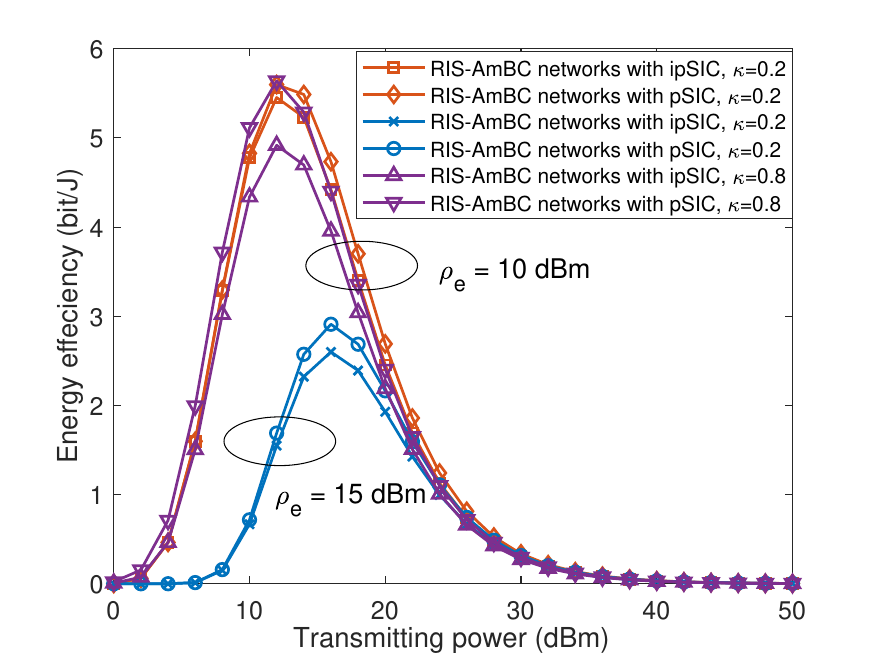}
        \caption{The secrecy energy efficiency versus transmitting power in RIS-AmBC networks.
        }
        \label{EneEffi_SNR}
    \end{center}
\end{figure}

Fig. \ref{EneEffi_SNR} plots the secrecy energy efficiency versus the transmitting power in RIS-AmBC networks., where $M=4$, $P=2$, $P_s=40$ dBm, $\rho_e=10$ dBm, $\kappa=0.3$, $R_u=1$ and $R_c=0.1$ BPCU. As the transmitting power rises, the secrecy energy efficiency gradually decreases after reaching a peak. This is due to the fact that according to Fig. \ref{SST SNR}, the secrecy throughput of RIS-AmBC networks tends to converge in the high transmitting power region, i.e. the numerator in (\ref{SEE define}) has an upper bound, while the transmitting power in the denominator part continues to increase, so the secrecy energy efficiency drops after reaching a maximum. In addition, when Eve has a high level of wiretapping ($\rho_e=15$ dBm), RIS-AmBC networks require a larger transmitting power to keep the secrecy rate stable in order to reach the optimal security energy efficiency.

\section{Conclusion}\label{Conclusion}\label{SectionV}
In this paper, the secrecy performance of RIS-AmBC networks has been investigated in detail. We have derived the closed-form and asymptotic expressions of SOP by taking into account both ipSIC and pSIC cases. The secrecy diversity orders of LU have been obtained in high SNR regions, which is connected with the residential interference and the number of RIS elements. In addition, the secrecy throughput and secrecy energy efficiency of RIS-AmBC networks are also analyzed. Numerical results demonstrated that RIS-AmBC can achieve superior secrecy outage behaviours compared to the conventional AmBC networks. To ensure the reliability and effectiveness of data and backscatter signal transmission, the best position to deploy RIS is not right in the middle of the BS and the LU, but slightly toward the BS/LU.

\appendices
\section*{Appendix~A} \label{AppendixA}
\renewcommand{\theequation}{A.\arabic{equation}}
\setcounter{equation}{0}
According to \cite{2022XinweiYueRISNOMA}, the  PDF of $Z = {{{\left| {{\mathbf{v}}_p^H{{\mathbf{\Lambda }}_{re}}{{\mathbf{h}}_{sr}}} \right|}^2}}$ can be given as
\begin{small}
\begin{align}\label{app1}
{f_{Z}}\left( z \right) = \frac{{2{z^{\frac{{Q - 1}}{2}}}}}{{\Gamma \left( Q \right){{\left( {\sqrt {{\Omega _{sr}}{\Omega _{re}}} } \right)}^{Q + 1}}}}{K_{Q - 1}}\left( {\sqrt {\frac{4z}{{{\Omega _{sr}}{\Omega _{re}}}}} } \right).
\end{align}
\end{small}Hence, the expectation of $Z = {{{\left| {{\mathbf{v}}_p^H{{\mathbf{\Lambda }}_{re}}{{\mathbf{h}}_{sr}}} \right|}^2}}$ is given by
\begin{small}
\begin{align}\label{app2}
\mathbb{E}\left( Z \right) &= \int_0^\infty  {z{f_Z}\left( z \right)} dz \notag \\ &= \int_0^\infty  {\frac{{2{{\left[ {\Gamma \left( Q \right)} \right]}^{ - 1}}{z^{\frac{{Q + 1}}{2}}}}}{{{{\left( {\sqrt {{\Omega _{sr}}{\Omega _{re}}} } \right)}^{Q + 1}}}}} {K_{Q - 1}}\left( {\sqrt {\frac{{4z}}{{{\Omega _{sr}}{\Omega _{re}}}}} } \right)dz\notag \\  &\mathop = \limits^{\left( a \right)} \int_0^\infty  {\frac{{4{{\left[ {\Gamma \left( Q \right)} \right]}^{ - 1}}{x^{Q + 2}}}}{{{{\left( {\sqrt {{\Omega _{sr}}{\Omega _{re}}} } \right)}^{Q + 1}}}}} {K_{Q - 1}}\left( {\sqrt {\frac{{4{x^2}}}{{{\Omega _{sr}}{\Omega _{re}}}}} } \right)dx,
\end{align}
\end{small} where ${\left( a \right)}$ represents the operation $z = {x^2}$.
Referring to \cite[Eq. (6.561.16)]{gradvstejn2000table}, we can obtain $\mathbb{E}\left( {Z} \right) = Q{\Omega _{sr}}{\Omega _{re}}$. Therefore, the SOP expressions for the Eve to decode data signal can be recast as
\begin{small}
\begin{align}\label{app4}
{P_u}\left( {{R_u}} \right) &= {\rm{P_r}}\left( {{C_u} < {R_u}} \right)\notag \\ &= {P_r}\left[ {{\gamma _{uu}} < {2^{{R_u}}}\left( {1 + {\gamma _{eu}}} \right) - 1} \right]\notag \\ &=
\int_0^\infty  {{f_Y}\left( y \right){F_X}\left[ {{\varepsilon _{u}}\left( {{\kappa ^2}{y^2}\rho  + 1} \right)/\rho } \right]} dy \notag \\
&= 1 - \int_0^\infty  {{e^{ - \left[ {\frac{{{\varepsilon _u}\left( {{\kappa ^2}{y^2} + 1} \right)}}{{\rho {\Omega _u}}} + \frac{y}{\beta }} \right]}}\frac{{{y^\alpha }}}{{{\beta ^{\alpha  + 1}}\Gamma \left( {\alpha  + 1} \right)}}} dy
\end{align}
\end{small}where $X = {\left| {{h_u}} \right|^2}$, $Y = {\left| {\sum\nolimits_{m = 1}^M {\left| {h_{ru}^mh_{sr}^m} \right|} } \right|}$ and ${\varepsilon _{u}} = {2^{{R_u}}}\left[ {1 + \left( {{\rho _e}{\Omega _e}} \right)/\left( {{\kappa ^2}Q{\Omega _{sr}}{\Omega _{re}}{\rho _e} + 1} \right)} \right] - 1$. Assuming $y/\beta  = t$, (\ref{app4}) can be further recast as
\begin{small}
\begin{align}\label{app5}
{P_u}\left( {{R_u}} \right) = 1 - \int_0^\infty  {\frac{{{e^{ - t}}{t^\alpha }}}{{\Gamma \left( {\alpha  + 1} \right)}}{e^{ - \frac{{{\varepsilon _u}\left( {{\kappa ^2}{\beta ^2}{t^2} + 1} \right)}}{{\rho {\Omega _u}}}}}} dt.
\end{align}
\end{small} With the assistance of Gauss-Laguerre integration, (\ref{SOP uu}) can be attained which completes the proof.

\bibliographystyle{IEEEtran}
\bibliography{mybib_test2}

\end{document}